\pgfplotsset{compat=1.3}
\newcommand{\poly}{\ensuremath{\text{poly}}}
\newcommand{\NP}{\ensuremath{\text{NP}}}
\newcommand{\coNP}{\ensuremath{\text{coNP}}}
\newcommand{\huawei}{\textit{Huawei}}
\newcommand{\azure}{\textit{Azure}}
\newcommand{\unlessPK}{\ensuremath{\coNP\subseteq \NP/\poly}}
\newcommand{\ntypes}{\tau}
\theoremstyle{definition}
\newtheorem{theorem}{Theorem}[section]
\newtheorem{example}[theorem]{Example}
\newtheorem{lemma}[theorem]{Lemma}
\newtheorem{problem}[theorem]{Problem}
\newtheorem{definition}[theorem]{Definition}
\newtheorem{proposition}[theorem]{Proposition}
\newtheorem{conjecture}[theorem]{Conjecture}
\newtheorem{rrule}[theorem]{Reduction Rule}
\newcommand{\ai}[1]{\ensuremath{a^{(#1)}}}
\newcommand{\ri}[1]{\ensuremath{s^{(#1)}}}
\newcommand{\di}[1]{\ensuremath{e^{(#1)}}}
\newcommand{\N}{\ensuremath{\mathbb N}}
\newcommand{\mc}{\ensuremath{h}}
\newcommand{\numflavors}{\ensuremath{\phi}}
\newcommand{\OPT}{\text{OPT}}
\newcommand{\alphagap}{$\alpha$\hyp gap}
\newtheoremstyle{iostuff}%
{0pt}%
{0pt}%
{\hangindent=\parindent}%
{}%
{\itshape}%
{:}%
{.5em}%
{}%
\theoremstyle{iostuff}
\newtheorem*{probinstance}{Input}
\newtheorem*{probfind}{Find}
\crefname{construction}{Construction}{Constructions}
\crefname{oquestion}{Open Question}{Open Question}
\crefname{paragraph}{Section}{Sections}
\crefname{lemma}{Lemma}{Lemmas}
\Crefname{lemma}{Lem.}{Lem.}
\crefname{theorem}{Theorem}{Theorems}
\Crefname{theorem}{Thm.}{Thm.}
\crefname{proposition}{Proposition}{Propositions}
\Crefname{proposition}{Prop.}{Props.}
\crefname{remark}{Remark}{Remarks}
\Crefname{remark}{Rem.}{Rem.}
\crefname{prop}{Property}{Properties}
\crefname{problem}{Problem}{Problems}
\crefname{observation}{Observation}{Observations}
\crefname{corollary}{Corollary}{Corollaries}
\crefname{example}{Example}{Examples}
\Crefname{corollary}{Cor.}{Cors.}
\crefname{line}{line}{lines}
\crefname{section}{Section}{Sections}
\Crefname{section}{Sec.}{Secs.}
\crefname{rrule}{Reduction Rule}{Reduction Rules}
\begin{document}
\begin{frontmatter}
  \author{Ren\'e van Bevern\corref{cor1}}
  \ead{rene.van.bevern@huawei.com}
  \author{Andrey Melnikov}
  \ead{melnikov.andrey@huawei.com}
  \author{Pavel\ V.\ Smirnov}
  \ead{smirnov.pavel2@huawei.com}
  \author{Oxana\ Yu.\ Tsidulko}
  \ead{tsidulko.oksana@huawei.com}
  \address{Huawei Technologies Co., Ltd.}
  \title{On data reduction
    for dynamic vector
    bin packing}

  \cortext[cor1]{Corresponding author}
  
  \begin{abstract}
    \looseness=-1
    We study a
    dynamic vector bin packing (DVBP) problem.
    We show hardness
    for shrinking arbitrary DVBP instances
    to size polynomial in the number of request types or
    in the maximal number of requests overlapping in time.
    We also present a simple polynomial\hyp time data reduction algorithm
    that allows to recover $(1+\varepsilon)$\hyp approximate
    solutions for arbitrary~$\varepsilon>0$.
    It shrinks instances
    from  Microsoft Azure and Huawei Cloud by an order
    of magnitude
    for $\varepsilon=0.02$.
  \end{abstract}

  \begin{keyword}
    approximation \& heuristics
    \sep
    parameterized complexity
    \sep
    lossy kernelization
    \sep
    resource allocation
  \end{keyword}
\end{frontmatter}

\thispagestyle{empty}

\section{Introduction} \label{dr:sec:reduction}
\noindent
Motivated by applications to computer storage allocation,
\citet{CGJ83} introduced the %
dynamic bin packing problem (DBP).
Motivated by resource allocation problems
in cloud data centers,
we study the multi\hyp dimensional generalization
of the problem:

\begin{problem}[Dynamic vector bin packing (DVBP)]
  \label[problem]{dr:prob:nvmsp}
  \begin{probinstance}
    A bin capacity~$b\in\N^d$,
    \emph{requests}~$\ai {1},\dots,\ai {n}\in \N^d$, 
    request start times~$\ri 1,\dots,\ri n\in\N$,
    request end times~$\di 1,\dots,\di n\in\N$. 
  \end{probinstance}
  \begin{probfind}
    A partition of~$\{1,\dots,n\}$
    into \emph{bins} $B_1,\dots,B_k$ %
    with minimum~$k$ such that,
    for any \emph{time instant}~$t\in \N$ and each bin~$B_j$,
    it holds that (component\hyp wise)
    \[\sum_{\substack{i\in B_j\\
          \ri i\leq t<\di i }} \ai i
  \le b.\]
  \end{probfind}
\end{problem}

\begin{definition}[flavor, type, height]
  Two requests $a^{(i)}$ and $a^{(j)}$
  have the same \emph{flavor} if
  $a^{(i)}=a^{(j)}$.
  They are of the same \emph{type}
  if, additionally,
  $\ri{i}=\ri{j}$ and $\di{i}=\di{j}$.

  The \emph{height}~$\mc$ of an instance
  is the maximum number of requests active
  at the same time.
  We denote the number of flavors by~$\numflavors$,
  and the number of types by~$\ntypes$.
\end{definition}
\begin{example}\label{ex:vm-scheduling}
  Consider a pool of
  identical servers of capacity~$b\in \N^2$,
  providing $b_1$ processors and $b_2$ units of memory.
  Customers make requests $\ai i\in\N^2$
  for virtual machines
  with $\ai i_1$ processors and $\ai i_2$ units of memory
  from time \ri i to \di i.
  The problem of satisfying
  all customer requests
  using a minimum number of servers
  is DVBP with $d=2$.
  In practice,
  customers usually have the choice
  among a few virtual machine flavors.
\end{example}

\noindent
In the virtual machine assignment scenario,
the start and end times of requests
are usually not known beforehand
and the problem
has to be solved online:
on arrival, each request
is immediately assigned to a bin,
reassigning requests to other bins may be allowed or not \citep{CKPT17}.
Since, in the worst case,
any online algorithm for DVBP
will use $\Omega(d^{1-\varepsilon})$~times
the optimal number of bins,
cloud providers use various online heuristics tailored
to their presumed input distributions \citep[see, e.\,g.,][]{SLD+18}.

In order to empirically evaluate the quality of these heuristics,
computing (close to) optimal solutions to the offline DVBP
is desirable.
Unfortunately,
even when all requests have identical start and end times
and $d=2$,
the best known polynomial\hyp time approximation algorithm
for DVBP yields an asymptotic approximation factor
of $1.405+\varepsilon$ \citep{BEK16}.
Unless P${}={}$NP,
asymptotic $(1+\varepsilon)$\hyp approximations
for arbitrarily small $\varepsilon>0$
will require superpolynomial time \citep{Woe97b,Ray21}.

Data reduction has proven to be a powerful way
to cope with superpolynomial problem complexity
\citep{ABG+20,ALM+22}.

\paragraph{Our results and outline of this work}
%
We study the potential for polynomial\hyp time
data reduction for DVBP.

In \cref{dr:prel},
we introduce basic tools and notation. %
In \cref{dr:sec:hard},
we show hardness results for data reduction. %
In \cref{sec:ilp},
we present exact data reduction,
and in \cref{sec:dr} --- data reduction
that guarantees recoverability of $(1+\varepsilon)$\hyp approximate
solutions for any~$\varepsilon>0$.

Due to our hardness results,
we cannot prove bounds on the data reduction effect.
For this reason,
the effect is evaluated experimentally in \cref{dr:exp}.
We were able to shrink the number of request types
in real\hyp world instances to about $2\,\%$ of their initial number,
and the number of requests --- to~$8\%$ on one data set,
and to~$20\%$ on another.

\section{Preliminaries}
\label{dr:prel}

\subsection{Parameterized optimization and decision problems}

\begin{definition}
	A \emph{decision problem} is a subset~$\Pi\subseteq\Sigma^*$
	for some finite alphabet~$\Sigma$.
	The task is,
	given an \emph{instance}~$x\in\Sigma^*$,
	determining whether $x\in \Pi$.
	If $x\in\Pi$,
	then $x$~is a \emph{yes\hyp instance}.
	Otherwise,
	it is a \emph{no\hyp instance}.
\end{definition}

\noindent
For optimization problems,
we use the terminology of Garey and Johnson \cite{GJ79}.
We will only consider \emph{minimization problems} in our work.
\begin{definition}
  A \emph{combinatorial optimization problem}~$\Pi$ is a triple~$\Pi=(D_\Pi,S_\Pi,m_\Pi)$,
  where
  \begin{enumerate}
  \item $D_\Pi$ is a set of \emph{instances},
  \item $S_\Pi$ is a function
    assigning to each instance $I\in D_\Pi$
    a finite set~$S_\Pi(I)$ of
    \emph{(feasible) solutions}, and
  \item $m_\Pi$ is a function
    assigning a \emph{solution cost}~$m_\Pi(I,\sigma)$
    to each feasible solution~$\sigma\in S_\Pi(I)$
    of an instance~$I\in D_\Pi$.
  \end{enumerate}
  An \emph{optimal solution}
  for an instance~$I\in D_\Pi$
  is a feasible solution~$\sigma\in S_\Pi(I)$
  minimizing~$m_\Pi(I,\sigma)$.
  Its cost is denoted as $\OPT_\Pi(I)$,
  where we drop the subscript~$\Pi$
  when the optimization problem is clear from context.
  An \emph{$\alpha$\hyp approximate solution} for an instance~$I$
  of a combinatorial optimization problem~$\Pi$
  is a feasible solution of cost at most $\alpha\cdot\OPT_\Pi(I)$.
\end{definition}

\noindent
Each optimization problem comes with
natural associated
\emph{decision versions} and
\emph{gap\hyp version}:
\begin{definition}
  Let $\Pi$~be a combinatorial optimization problem
  and $\alpha\geq 1$.

  By \emph{$\alpha$\hyp gap $\Pi$}
  we denote the following decision problem:
  given a number~$r$ and an
  instance $I$ of~$\Pi$ 
  such that $\OPT(I)\leq r$ or $\OPT(I)>\alpha r$,
  it is required to decide whether $\OPT(I)\leq r$.

For $\alpha=1$,
we call $\alpha$\hyp gap $\Pi$ simply
the \emph{decision version of $\Pi$}.
\end{definition}

\noindent
If $\alpha$\hyp gap $\Pi$ is NP\hyp hard,
then $\Pi$ cannot be better than $\alpha$\hyp approximated in polynomial\hyp time,
unless P${}={}$NP.

\bigskip\noindent
We use the parameterized complexity notation
due to \citet{FG06},
as it equally well applicable
to decision and optimization problems.

\begin{definition}
  A \emph{parameterization} is a
  polynomial\hyp time computable
  mapping~$\kappa\colon\Sigma^*\to\N$
  of instances (of decision or optimization problems)
  to a \emph{parameter}.
  For a (decision or optimization) problem~$\Pi$
  and parameterization~$\kappa$,
  the pair $(\Pi,\kappa)$ is called
  a \emph{parameterized (decision or optimization) problem}.
\end{definition}

\subsection{Data reduction with performance guarantees}
\noindent
A \emph{data reduction}
is a conversion of a problem input
into a ``similar'' but smaller one.
There are multiple ways to formalize the term.
A well\hyp known notion of data reduction with performance guarantees
is kernelization~\citep{FG06}:

\begin{definition}
	\label{dr:def:compression}
	A~\emph{kernelization} for a parameterized decision problem~$(\Pi,\kappa)$
	is a polynomial\hyp time algorithm
	that maps any instance~$x\in\Sigma^*$
	to an instance~$x'\in\Sigma^*$
	such that
	\begin{enumerate}[(i)]
		\item $x\in \Pi \iff x'\in \Pi$, and
		\item $|x'|\leq g(\kappa(x))$ for some computable function~\(g\).
	\end{enumerate}
	We call \(x'\) the \emph{problem kernel}
	and \(g\) its \emph{size}.
\end{definition}

\noindent
We will also
consider approximate data reduction \cite{LPRS17}:

\begin{definition}
  An \emph{$\alpha$-approximate preprocessing}
  for a parameterized optimization problem~$(\Pi,\kappa)$
  consists of two algorithms:
  \begin{enumerate}[(i)]
  \item The first algorithm reduces an instance~$I$
    of~$\Pi$ to an instance~$I'$ of~$\Pi$
    in polynomial time,
  \item The second algorithm turns any
    \(\beta\)\hyp approximate solution for~$I'$
    into an \(\alpha\beta\)\hyp
    approximate
    solution for~$I$ in polynomial time.
  \end{enumerate}
  An \emph{$\alpha$-approximate kernelization}
  is an $\alpha$-approximate preprocessing
  such that $|I'|\leq g(\kappa(I))$
  for some computable function~$g\colon\N\to\N$.
  We call $g$ the \emph{size} of the \emph{approximate kernel}~$I'$.
\end{definition}

\subsection{Hardness of data reduction}
\noindent
To exclude problem kernels of polynomial size,
we use \emph{AND\hyp compositions} \citep{BJK14}.

\begin{definition}[AND-composition]
	\label[definition]{dr:def:crossco}
	An equivalence relation over~$\Sigma^*$ is
        \emph{polynomial} if
	\begin{itemize}
        \item there is an algorithm that decides~$x\sim y$ in polynomial time
          for any two instances~$x,y\in\Sigma^*$, and %
        \item the number of equivalence classes of~$\sim$ over any \emph{finite} set~$S\subseteq
          \Sigma^*$ is polynomial in $\max_{x\in S}|x|$.
	\end{itemize}
	
	\noindent
	A language~$K\subseteq\Sigma^*$ \emph{AND\hyp composes} into a
	parameterized language~$(\Pi,\kappa)$ if there
	is a polynomial\hyp time algorithm,
	called \emph{AND\hyp composition},
	that, given $s$~instances $x_1,\ldots,x_s$ that are
	equivalent under some polynomial equivalence relation, outputs
	an instance~$x^*$ such that
	\begin{itemize}
        \item $\kappa(x^*)\in\poly(\max^p_{i=1}|x_i|+\log s)$,
        \item $x^*\in \Pi$
          if and only if $x_i\in K$ for all~$i\in\{1,\dots,s\}$.
	\end{itemize}
\end{definition}

\begin{proposition}[\citet{Dru12}]
	\label{dr:prop:crosscomp} 
	If an NP-hard language $K\subseteq \Sigma^*$ 
        AND\hyp composes into a
	parameterized problem~$(\Pi,\kappa)$,
	then
	there is no 	polynomial-size problem kernel for~$(\Pi,\kappa)$
        unless $\unlessPK$.
\end{proposition}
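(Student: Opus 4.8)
The plan is to reduce the statement to Drucker's theorem on the impossibility of distilling \NP-hard languages. Recall that an \emph{AND-distillation} of a language~$L$ is a polynomial-time algorithm that, given any finite family $x_1,\dots,x_t$ of strings, each of length at most~$n$, outputs a single string~$y$ with $|y|\in\poly(n)$ (crucially, not depending on~$t$) and with $y$ in some fixed target language if and only if $x_i\in L$ for every~$i$. \citet{Dru12} proved that no \NP-hard language admits an AND-distillation unless \unlessPK{} (indeed, unless the polynomial hierarchy collapses). Hence it suffices to show: if, in addition to the hypotheses of the proposition, $(\Pi,\kappa)$ had a polynomial-size problem kernel, then the \NP-hard language~$K$ would admit an AND-distillation --- a contradiction.

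To construct the distillation, I would take instances $x_1,\dots,x_t$ of~$K$, each of length at most~$n$, and pad each $x_i$ so that they all have the same length (after first, if necessary, replacing $K$ by a polynomially equivalent, padding-robust \NP-hard variant). Then $x_1,\dots,x_t$ are equivalent under the ``equal length'' relation, which is a polynomial equivalence relation in the sense of \cref{dr:def:crossco}. Applying the AND-composition yields an instance~$x^*$ of~$\Pi$ with $\kappa(x^*)\in\poly(n+\log t)$ and with $x^*\in\Pi$ exactly when every $x_i\in K$. Applying the assumed polynomial-size kernelization to~$x^*$ produces $x'$ with $x'\in\Pi\iff x^*\in\Pi$ and $|x'|\le g(\kappa(x^*))\in\poly(n+\log t)$. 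As long as $\log t$ is polynomially bounded in~$n$ --- e.g.\ for families with $t\le 2^{n}$, which is the regime relevant for the lower bound --- this is $|x'|\in\poly(n)$, independent of~$t$. Taking membership in~$\Pi$ as the fixed target language, the map $(x_1,\dots,x_t)\mapsto x'$ is an AND-distillation of~$K$.

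One minor point deserves attention when, unlike above, the AND-composition is tailored to a nontrivial polynomial equivalence relation: \cref{dr:def:crossco} only promises good behaviour on instances lying within a \emph{single} class. One then partitions the $t$ inputs into their $\poly(n)$ classes, composes within each class to get instances $x^*_1,\dots,x^*_m$ with $m\in\poly(n)$, and merges these polynomially many instances into one --- either through a disjoint-union / AND gadget for~$\Pi$ or by one further level of composition --- before kernelizing. For the proposition itself the ``equal length'' relation sidesteps this, so I would not dwell on it; the remaining bookkeeping (keeping the preprocessing polynomial, verifying the equivalences propagate correctly) is routine.

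The part I would \emph{not} reprove, but import wholesale from \citet{Dru12}, is the complexity-theoretic core: that an AND-distillation of an \NP-hard language forces \unlessPK. This is Drucker's ``AND of disjoint instances'' lemma, a subtle probabilistic/information-theoretic argument (strengthening the earlier OR-distillation theorem of Fortnow and Santhanam) showing that the short distilled string cannot reliably encode the conjunction of arbitrarily many independent \NP-questions unless polynomial advice already suffices to decide \coNP. This is squarely the main obstacle; everything else above is standard plumbing between ``composition'' and ``distillation'', as formalized by \citet{BJK14}.
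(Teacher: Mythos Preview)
The paper does not give a proof of this proposition at all: it is stated with attribution to \citet{Dru12} (and implicitly to the cross\hyp composition framework of \citet{BJK14}) and used as a black box. Your sketch is essentially the standard argument from those references --- compose, then kernelize, to obtain an AND\hyp distillation of~$K$, and invoke Drucker's theorem --- so there is nothing to compare against in the paper itself.

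One small technical wrinkle in your outline: when the composition requires a nontrivial polynomial equivalence relation, your proposed fix of ``merging'' the $m\in\poly(n)$ composed instances $x^*_1,\dots,x^*_m$ via a further level of composition does not type\hyp check, since the AND\hyp composition goes from~$K$ into $(\Pi,\kappa)$, not from~$\Pi$ into itself, and an ``AND gadget for~$\Pi$'' need not exist. The standard move (cf.\ \citet{BJK14}) is instead to kernelize each $x^*_j$ separately, obtaining $m\in\poly(n)$ strings each of size $\poly(n)$, and output the tuple; Drucker's theorem permits the distillation target to be an arbitrary fixed language, in particular $\{(y_1,\dots,y_m):y_j\in\Pi\text{ for all }j\}$. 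Likewise, the restriction to $t\le 2^n$ is justified simply by deduplicating the inputs (there are at most $2^{n+1}$ strings of length~$\le n$, and AND is idempotent). With these tweaks your sketch is correct.
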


\section{Classification results}\label{dr:sec:hard}
\noindent
In this section,
we prove that,
unless the polynomial\hyp time hierarchy collapses,
DVBP has no problem kernels
of size polynomial in~$\mc+\numflavors$.
Our reduction gives good reason to conjecture
that there are no $\alpha$\hyp approximate kernels
for any $\alpha<600/599$, either.

\subsection{Existence of exponential\hyp size kernels}
\label{dr:sec:fpt}
\noindent
Before proving the non\hyp existence
of problem kernels with size polynomial in~$\mc+\numflavors$,
we prove that, in principle,
problem kernels (of exponential size) \emph{do} exist.

A folklore result from parameterized complexity \cite[Theorem~1.4]{FLSZ19}
is that the following
theorem gives a problem kernel of size $O(k^{2\mc})\subseteq O(\mc^{2\mc})$ for DVBP.
\begin{theorem}
  \label{dr:thm:mc}
  The DVBP problem can be solved in $O(k^{2\mc}\cdot n+n\log n)$  time.
\end{theorem}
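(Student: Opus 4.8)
The plan is to solve the decision version of DVBP for a fixed number~$k$ of bins by a sweep\hyp line dynamic program along the time axis, and to observe that for the optimization problem the optimal number of bins is at most the height~$\mc$, so that iterating the decision procedure over $k=1,2,\dots$ stays within the claimed bound.

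\emph{The optimum is at most the height.} If $\ai i\not\le b$ for some request~$i$, the instance has no feasible solution, which is detected in $O(nd)$~time; so assume every request fits. Regard each request~$i$ as the half\hyp open interval $[\ri i,\di i)$. The overlap graph of these intervals is an interval graph, hence perfect, so its chromatic number equals its clique number; by the Helly property of intervals on a line, a clique is precisely a set of requests sharing a common time instant, so the clique number equals~$\mc$. An $\mc$\hyp coloring is computed in $O(n\log n)$~time by sorting the $2n$~endpoints and giving each starting interval the least color not in use by a currently active one. Each color class is a set of pairwise disjoint intervals, so placing one color class into each bin is feasible --- at most one request per bin is then active at any instant, and it fits since $\ai i\le b$. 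Hence $\OPT(I)\le\mc$, and it suffices to run the decision procedure for $k=1,2,\dots$, stopping at the first feasible value, which is at most~$\mc$.

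\emph{The dynamic program for fixed~$k$.} Sort the at most $2n$~start and end times into $t_0<t_1<\dots<t_m$; this is the only step costing $O(n\log n)$, and it is performed once. During a slot $[t_j,t_{j+1})$ the set~$A_j$ of active requests is constant with $|A_j|\le\mc$, so enforcing the per\hyp bin capacity~$b$ at the single instant~$t_j$ suffices for the whole slot. The table~$T_j$ we maintain consists of every map $f\colon A_j\to\{1,\dots,k\}$ that is the restriction to~$A_j$ of some assignment of all requests started before~$t_{j+1}$ that respects~$b$ at every instant below~$t_{j+1}$; hence $|T_j|\le k^{\mc}$. Starting before the first event with the empty active set and the table holding only the empty map, we sweep left to right: at event time~$t_j$ we first drop the requests ending there from every stored map, and then, for each stored map, we enumerate all at most $k^{\mc}$ ways of assigning bins to the requests starting at~$t_j$, keep the resulting maps that still respect~$b$ on~$A_j$, and deduplicate. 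Each boundary therefore examines at most $k^{\mc}\cdot k^{\mc}=k^{2\mc}$ candidate maps. After the last event the table is non\hyp empty if and only if the instance admits a partition into $k$~bins; a predecessor pointer stored with each entry lets us recover such a partition. With $O(n)$ boundaries, one run costs $O(k^{2\mc}\cdot n)$ time, treating~$d$ as a constant (the remaining factors polynomial in~$\mc$ and~$d$ are suppressed, and $k=1$ is a direct feasibility check). Running this for $k=1,\dots,\OPT(I)\le\mc$ after the single sort, and using that the summands~$(k')^{2\mc}$ grow at least geometrically while $k'\le\mc$ so that $\sum_{k'\le\OPT(I)}(k')^{2\mc}=O((\OPT(I))^{2\mc})$, yields the claimed $O(k^{2\mc}\cdot n+n\log n)$ running time with $k=\OPT(I)$.

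The step I expect to require the most care is pinning down the dynamic programming invariant and verifying the transition: that evaluating~$b$ at one representative instant per slot loses nothing; that keeping only the bin assignment of the \emph{currently} active requests, together with deduplication after each event, preserves exactly the partial assignments that extend to a globally feasible one; and that start\hyp events and end\hyp events at a shared time instant are applied in the order --- ends first, then starts --- that makes the capacity test see precisely the set~$A_j$. The exponent~$2\mc$ is deliberately loose (precomputing the restrictions of~$T_j$ to the requests surviving a boundary would already reduce it to~$\mc$), but it is all we need, so I would not optimize it.
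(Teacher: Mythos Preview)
Your proposal is correct and uses essentially the same sweep\hyp line dynamic program as the paper: maintain, over the at most~$2n$ distinct event times, all feasible assignments of the currently active requests to $k$~bins, and transition by agreeing on the requests that survive the boundary; the paper sweeps backwards and indexes a Boolean table by the partition, while you sweep forward and store the set of feasible maps, but these are the same algorithm. Your write\hyp up is in fact more complete than the paper's --- you explicitly justify $\OPT\le\mc$ via interval\hyp graph coloring and handle the outer loop over~$k$ with a geometric\hyp sum argument, both of which the paper leaves implicit.
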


\begin{proof}
  In $O(n\log n)$~time,
  we transform the instance
  so that $\ri i,\di i\in\{1,\dots,2n\}$
  for all $i\in\{1,\dots,n\}$
  (see \cref{dr:thm:tcri} in \cref{dr:timecomp} for details).
  We then solve the problem using dynamic programming.

  For each $t\in\{1,\dots,2n\}$,
  let $S_t:=\{ j\mid \ri j\leq t<\di j\}$
  be the set of requests active at time~$t$.
  For $t\in\{0,\dots,2n\}$
  and any partition~$X_1\uplus X_2 \uplus \dots \uplus X_{k} = S_t$
  (here, $\uplus$ denotes disjoint unions),
  consider
  the predicate
  \(
  T[t,X_1,X_2,\dots,X_{k}]
  \)
  that is true if and only if
  all requests~$j$ with $\ri j\geq t$ %
  can be packed into $k$~bins %
  so that requests $X_i$ are packed into bin~$i$
  for $i\in\{1,\dots,k\}$.
  Then $T[2n+1,X_1,X_2,\dots,X_{k}]$ is true
  for any choice
  of the sets~$X_1\uplus \dots\uplus X_{k}=S_{2n+1}=\emptyset$.

  Moreover,
  $T[t,X_1,X_2,\dots,X_{k}]$
  is true if and only if
  there is a partition
  $X_1'\uplus X_2'\uplus \dots\uplus X_{k}'=S_{t+1}$,
  such that $T[t+1,X_1',X_2',\dots,X_{k}']$ is true
  and %
  $X_i\cap S_t\cap S_{t+1}=X_i'\cap S_{t}\cap S_{t+1}$
  for each $i\in\{1,\dots,k\}$.
  That is, the partitions $\{X_i\}_{i=1}^k$
  and $\{X_i'\}_{i=1}^k$ put the requests
  in $S_t\cap S_{t+1}$ into the same bins.

  Thus, each of the $O(k^{\mc}n)$ values of~$T$
  can be computed in $O(k^{\mc})$~time
  by iterating over all
  possible partitions~$\{X_i'\}_{i=1}^k$,
  yielding a total running time of $O(k^{2\mc}n)$.
  The answer can be found in
  $T[0,\emptyset,\emptyset,\dots,\emptyset]$.
\end{proof}

\subsection{Non\hyp existence of polynomial\hyp size kernels}
\noindent
We now prove that,
unless the polynomial\hyp time hierarchy collapses,
DVBP has no problem kernels with size polynomial even in $\mc+\numflavors$.
Indeed,
not even $600/599$-gap DVBP has such a kernel: 


\begin{theorem}
    \label{dr:thm:no-alpha-gap-polykern}
    $\alpha$\hyp gap DVBP does not have a problem kernel
    of size polynomial in $\mc+\numflavors$,
    unless the polynomial hierarchy collapses,
    for any $\alpha\leq 600/599$.
\end{theorem}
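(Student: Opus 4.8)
The plan is to invoke the AND\hyp composition machinery of \cref{dr:prop:crosscomp}, composing an NP\hyp hard chromatic\hyp index problem into $\alpha$\hyp gap DVBP parameterized by $\mc+\numflavors$. As the source language~$K$ I would take ``given a simple cubic graph~$G$, is $\chi'(G)=3$?'', which is NP\hyp hard by Holyer's theorem and, since every cubic graph has $\chi'(G)\in\{3,4\}$ by Vizing's theorem, is already itself a gap problem. As the polynomial equivalence relation I identify all cubic graphs with the same number of vertices (collecting all remaining strings in one class); on any finite set this has only linearly many classes.

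First I would fix, for each~$N$, a gadget~$\rho$ sending a cubic graph~$G$ on vertex set~$\{1,\dots,N\}$ to a DVBP instance of dimension~$N$ with bin capacity $b=(1,\dots,1)$: create one request $e_u+e_v$ for every edge $\{u,v\}$ of~$G$, add~$L$ ``blocker'' requests each equal to~$b$, and give all these requests a common time window. A bin can then hold a set of edge\hyp requests exactly when they form a matching, while no blocker can share a bin with any other request, so $\OPT(\rho(G))=\chi'(G)+L\in\{L+3,L+4\}$. Choosing $L=594$, the value~$598$ attained in the ``no'' case exceeds $\alpha\cdot 597$ for every $\alpha\le 600/599$ (in fact $598/597>600/599$), so $(\rho(G),597)$ is a legitimate $\alpha$\hyp gap DVBP instance that is a yes\hyp instance exactly when $\chi'(G)=3$.

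The composition then places $\rho(G_1),\dots,\rho(G_s)$ one after another, giving the requests of the $i$-th gadget start time $2i-1$ and end time $2i$, reusing the same~$N$ coordinates in every gadget, and keeping $r=597$ (for the degenerate equivalence class simply output a fixed no\hyp instance). Since the windows are pairwise disjoint in time, a bin is packed independently in each of them, so $\OPT$ of the combined instance~$I^*$ equals $\max_i\OPT(\rho(G_i))$, which is~$597$ if all~$G_i$ are $3$\hyp edge\hyp colorable and~$598$ otherwise; hence $(I^*,597)$ is an $\alpha$\hyp gap DVBP instance for every $\alpha\le 600/599$ that lies in the language iff all~$G_i\in K$. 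Encoding everything through the incidence structure of~$K_N$ now pays off: the flavors occurring in~$I^*$ are among the edge vectors of~$K_N$ together with~$b$, so $\numflavors(I^*)\le\binom N2+1$; and at any time instant at most $|E(G_i)|+L\le\tfrac{3N}{2}+594$ requests are active, so $\mc(I^*)\le\tfrac{3N}{2}+594$. Therefore $\mc(I^*)+\numflavors(I^*)=O(N^2)$ is polynomial in $\max_i|G_i|$, hence certainly polynomial in $\max_i|G_i|+\log s$, as \cref{dr:def:crossco} demands, even though $|I^*|$ grows linearly with~$s$. \cref{dr:prop:crosscomp} then rules out a problem kernel of size polynomial in $\mc+\numflavors$ unless \unlessPK.

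The step I expect to be the real obstacle is this flavor uniformity. Standard hardness reductions for (vector) bin packing --- e.g.\ from \emph{3-Partition} --- use $\Theta(n)$ pairwise distinct item vectors whose coordinates spell out the concrete numbers, so naively stacking~$s$ of them would push $\numflavors$ to $\Theta(sn)$ and break the composition. Circumventing this forces every instance\hyp specific bit to live in \emph{which} requests appear and in which time window, not in their coordinate values, which is why one reduces from chromatic index over the single fixed host graph~$K_N$; the remaining work --- calibrating~$L$ so that the gap reaches $600/599$, and checking that the reduction is gap\hyp preserving --- is routine, but it is also precisely what makes it plausible that no $\alpha$\hyp approximate kernel exists for $\alpha<600/599$.
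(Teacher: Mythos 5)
Your proposal is correct, and it follows the paper's high\hyp level skeleton (an AND\hyp composition in the sense of \cref{dr:def:crossco} that stacks the input instances into pairwise disjoint time windows, so that $\OPT$ of the composed instance is the maximum over the gadgets, followed by \cref{dr:prop:crosscomp}), but it draws the hardness from a genuinely different source. The paper composes the $600/599$\hyp gap version of poly\hyp weight 2D-VBP (\cref{dr:prob:2d-vbp-special}), whose NP\hyp hardness first has to be established by massaging Ray's MAX-3-DM reduction into polynomially bounded integers (\cref{dr:lemma:2d-vbp-is-hard}); the polynomial weight bound is exactly what caps the number of flavors by $\poly(n)$ in \cref{dr:lemma:alpha-gap-and-composition}, and crucially the dimension stays $d=2$. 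You instead start from cubic 3\hyp edge\hyp colorability (Holyer), exploit the Vizing $3$\hyp vs\hyp $4$ gap, and enforce flavor uniformity by encoding each graph on a fixed vertex set via $0/1$ edge vectors of $K_N$, so $\numflavors\leq\binom N2+1$ and $\mc=O(N)$; your blocker padding to reach the ratio $598/597$ is harmless but unnecessary, since without blockers the no\hyp case already has $\OPT=4>\alpha\cdot 3$ for every $\alpha\leq 4/3\geq 600/599$. Your route is more elementary (no gap\hyp preserving APX machinery, no poly\hyp weight lemma) and in fact yields the promise gap $4/3$, but it pays for this by letting the dimension $d$ grow with the input, so it proves the theorem only as literally stated (with $d$ unbounded) and says nothing about DVBP with constantly many resources, whereas the paper's construction rules out polynomial kernels in $\mc+\numflavors$ already for $d=2$ --- the practically relevant regime of \cref{ex:vm-scheduling} --- and ties the constant $600/599$ to Ray's inapproximability bound, which is what motivates the paper's conjecture on $\alpha$\hyp approximate kernels.
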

\noindent
Note that the statement of
\cref{dr:thm:no-alpha-gap-polykern}
holds for \emph{any} parameter
bounded from above by $\mc + \numflavors$.
Moreover,
the theorem leads us to the following conjecture.
\begin{conjecture}
  DVBP has no $\alpha$\hyp approximate problem kernels
  of size polynomial in $\mc+\numflavors$,
  for any $\alpha<600/599$.
\end{conjecture}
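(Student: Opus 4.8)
The plan is to apply \cref{dr:prop:crosscomp}, so it suffices to name an NP-hard language~$K$ and an AND-composition of~$K$ into $(\alpha\text{-gap DVBP},\ \mc+\numflavors)$; I would design it so that a single composition works simultaneously for every $\alpha\le 600/599$. The one non-routine ingredient is a gap-producing reduction into DVBP. Concretely, I would build (adapting the constructions behind the non-existence of an APTAS for vector bin packing \citep{Woe97b,Ray21}, or via a direct gadget reduction from a combinatorial problem such as a bounded-occurrence satisfiability problem) a polynomial-time map~$R$ sending an instance~$y$ of an NP-hard problem~$K$ to a DVBP instance~$R(y)$ with threshold $r=r(y)$ such that: (i) if $y$ is a yes-instance then $\OPT(R(y))\le r$; (ii) if $y$ is a no-instance then $\OPT(R(y))>(600/599)\,r$; and, crucially for the parameter, (iii) the bin capacity~$b$, the set of request vectors, and the threshold~$r$ depend only on a polynomially bounded classification~$c(y)$ of~$y$ (e.g.\ the numbers of variables and clauses of a formula), so that $b$, the flavor universe $U=U_{c(y)}$ with $|U|\in\poly(|y|)$, and $r$ are all shared by any two instances with the same value of $c(\cdot)$ --- the combinatorial content of~$y$ being carried by \emph{which} gadgets appear, not by the numeric values of the vectors. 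Observe that (i)--(ii) say precisely that $R$ is a Karp reduction from~$K$ to $(600/599)$-gap DVBP, so that problem is already NP-hard; but for the AND-composition I additionally need the structural property~(iii).

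For the composition, use the equivalence relation ``$c(x)=c(x')$''; since $c$ takes only $\poly(|x|)$ values on instances of length~$\le|x|$, this is a polynomial equivalence relation in the sense of \cref{dr:def:crossco}. Given $s$ equivalent instances $y_1,\dots,y_s$ of~$K$, all mapped by~$R$ to blocks sharing one capacity~$b$, one flavor universe~$U$, and one threshold~$r$, form $x^*=(r,I)$, where $I$ is obtained by placing $R(y_1),\dots,R(y_s)$ into pairwise disjoint time windows --- shift every start and end time of~$R(y_i)$ by more than the largest end time occurring in $R(y_1),\dots,R(y_{i-1})$. Since distinct blocks never overlap in time, a family of $k$~bins packs~$I$ feasibly if and only if it packs every block feasibly, hence $\OPT(I)=\max_i\OPT(R(y_i))$. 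Consequently $I$ still satisfies the $(600/599)$-gap promise, and therefore the $\alpha$-gap promise for every $\alpha\le 600/599$; moreover $\OPT(I)\le r$ iff $\OPT(R(y_i))\le r$ for all~$i$ iff $y_i\in K$ for all~$i$, so $x^*$ is a yes-instance of $\alpha$-gap DVBP exactly when all $y_i\in K$. Shifting times does not change overlaps, so $\mc(I)=\max_i\mc(R(y_i))\in\poly(\max_i|y_i|)$, and the flavors occurring in~$I$ form a subset of~$U$, so $\numflavors(I)\le|U|\in\poly(\max_i|y_i|)$; thus the parameter is $\poly(\max_i|y_i|+\log s)$, as required. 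By \cref{dr:prop:crosscomp}, $\alpha$-gap DVBP then has no polynomial-size kernel parameterized by $\mc+\numflavors$ unless $\unlessPK$, equivalently unless the polynomial hierarchy collapses; since any parameter bounded from above by $\mc+\numflavors$ inherits this, the theorem follows for every $\alpha\le 600/599$.

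The main obstacle is the base reduction~$R$ satisfying (i)--(iii) simultaneously: an honest multiplicative gap bounded away from~$1$ \emph{together with} a bounded, $c(y)$-determined family of request vectors. Known APTAS-impossibility reductions already yield a gap $1+\varepsilon_0$, but they are frequently reductions from numeric problems whose output vectors encode the input numbers, so they violate~(iii); one must therefore use (or re-engineer) a reduction from a purely combinatorial NP-hard problem in which every gadget is assembled from a fixed, structurally indexed set of vectors and capacities. I expect the most delicate point to be tracing the gadget arithmetic carefully enough to bring the constant down to exactly~$600/599$. Everything downstream of~$R$ --- the disjoint-in-time concatenation, the identity $\OPT(I)=\max_i\OPT(R(y_i))$, the choice of equivalence relation, and the parameter bookkeeping --- is routine.
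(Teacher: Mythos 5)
There is a genuine gap, and it is a gap of target rather than of detail: the statement you are asked to prove is a \emph{conjecture} about $\alpha$\hyp\emph{approximate} (lossy) kernels in the sense of \citet{LPRS17}, whereas everything you construct and conclude concerns ordinary kernels for the \emph{decision} problem $\alpha$\hyp gap DVBP. Your chain --- a gap\hyp producing reduction (essentially Ray's MAX\hyp 3\hyp DM construction, made poly\hyp weight) followed by an AND\hyp composition that places the blocks in disjoint time windows so that $\OPT(I)=\max_i\OPT(R(y_i))$ --- is precisely how the paper proves \cref{dr:thm:no-alpha-gap-polykern} via \cref{dr:lemma:2d-vbp-is-hard,dr:lemma:alpha-gap-and-composition,dr:prop:crosscomp}. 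But \cref{dr:prop:crosscomp} (Drucker's AND\hyp composition lower bound) speaks only about problem kernels for parameterized \emph{decision} problems, i.e.\ polynomial\hyp time self\hyp reductions preserving yes/no membership. An $\alpha$\hyp approximate kernelization is a different object: a size\hyp bounded reduction for the \emph{optimization} problem together with a solution\hyp lifting algorithm, and it need not yield any equivalence\hyp preserving compression of the gap language. You never supply the bridge from ``$\alpha$\hyp gap DVBP has no polynomial kernel'' to ``DVBP has no $\alpha$\hyp approximate kernel of polynomial size,'' and no such bridge is currently known for AND\hyp compositions --- this is exactly the obstacle the paper names (the final open question in the work of \citet{LPRS17}), which is why the statement is left as a conjecture rather than derived from \cref{dr:thm:no-alpha-gap-polykern}.

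Two smaller points follow from the same confusion. First, the thresholds do not line up: the conjecture is about $\alpha<600/599$, your conclusion is about $\alpha\le 600/599$ for the gap problem, and without the missing bridge there is no licence to transfer between the two regimes (in the lossy\hyp kernel framework one typically loses something when passing from gap hardness at $1+\varepsilon_0$ to approximate\hyp kernel lower bounds strictly below $1+\varepsilon_0$). Second, your closing sentence ``the theorem follows'' is accurate in spirit --- what you have (re)proved, modulo the details of the base reduction you defer to \citet{Ray21}, is the paper's \cref{dr:thm:no-alpha-gap-polykern} --- but that theorem is the already\hyp established starting point, not the conjectured statement about approximate kernels.
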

\noindent
Unfortunately,
we cannot formally connect this conjecture,
for example,
to a collapse of the 
polynomial\hyp time hierarchy,
the main obstacle being the
final open question
in the work of \citet{LPRS17}.

In order to prove \cref{dr:thm:no-alpha-gap-polykern},
we prove that the following problem AND\hyp composes into DVBP.

\pagebreak[3]
\begin{problem}[2D vector bin packing (2D-VBP)]
  \label[problem]{dr:prob:2d-vbp}
  \begin{probinstance}
    Requests $\ai{1},\dots,\ai{n}\in(0,1]^2$.
  \end{probinstance}
  \begin{probfind}
    A partition of~$\{1,\dots,n\}$
    into bins~$B_1,\dots,B_k$
    with minimum~$k$ such that,
    for each bin~$B_j$, it holds that (component\hyp wise)
    \[\sum_{i\in B_j}\ai{i}\leq 1.\]
  \end{probfind}
\end{problem}

\noindent
\citet[Theorem~6]{Ray21} proved that
$600/599$\hyp gap 2D-VBP
does not have polynomial\hyp time algorithms with
asymptotic
approximation ratio %
better than $600/599$.
In fact,
he shows that $600/599$\hyp gap 2D-VBP is NP\hyp hard.
We show that
NP\hyp hardness also holds for the following problem variant,
in which all numbers are integer and bounded by a polynomial
in the number of items.

\begin{problem}[Poly\hyp weight 2D-VBP]
  \label[problem]{dr:prob:2d-vbp-special}
  \begin{probinstance}
    Bin capacity~$b\in \N^2$ and
    requests $\ai{1},\dots,\ai{n}\in\N^2$,
    all bounded by a polynomial in~$n$.
  \end{probinstance}
  \begin{probfind}
    A partition of~$\{1,\dots,n\}$
    into bins~$B_1,\dots,B_k$
    with minimum~$k$ such that,
    for each bin $B_j$, it holds that (component\hyp wise)
    \[\sum_{i\in B_j}\ai{i}\leq b.\]
  \end{probfind}
\end{problem}


\begin{lemma}
    \label{dr:lemma:2d-vbp-is-hard}
    The $600/599$\hyp gap
    version of \cref{dr:prob:2d-vbp-special}
    is NP-hard.
\end{lemma}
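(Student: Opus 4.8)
The plan is to reduce $600/599$\hyp gap 2D-VBP --- which is NP\hyp hard by \citet[Theorem~6]{Ray21} --- to its poly\hyp weight variant by a scaling argument that clears denominators, the only nontrivial point being to keep the resulting integers polynomially bounded. Given an instance of $600/599$\hyp gap 2D-VBP consisting of a threshold~$r$ and requests $\ai1,\dots,\ai n\in(0,1]^2$, let $q\in\N$ be a common denominator of all $2n$ coordinates, so that $q\cdot\ai i\in\N^2$ for every~$i$. We output the same threshold~$r$ together with the \cref{dr:prob:2d-vbp-special} instance with bin capacity $b:=(q,q)$ and requests $\tilde a^{(i)}:=q\cdot\ai i\in\{1,\dots,q\}^2$.

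Correctness of the reduction is immediate: a partition $B_1,\dots,B_k$ of $\{1,\dots,n\}$ satisfies $\sum_{i\in B_j}\ai i\le(1,1)$ component\hyp wise for every~$j$ if and only if it satisfies $\sum_{i\in B_j}\tilde a^{(i)}\le b$ component\hyp wise for every~$j$, so the two instances have exactly the same feasible solutions and hence the same optimum. Thus $\OPT\le r$ holds for one instance iff it holds for the other, the $600/599$\hyp gap promise is preserved verbatim, and --- provided $q$ is polynomially bounded --- the map is a valid polynomial\hyp time many\hyp one reduction, giving NP\hyp hardness of $600/599$\hyp gap \cref{dr:prob:2d-vbp-special}.

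It remains to choose $q$ polynomially bounded in~$n$, and here a generic argument does not suffice: although Ray's reduction runs in polynomial time and hence produces coordinates of polynomial \emph{bit}\hyp length, the least common denominator of $n$ such rationals could a priori be as large as $2^{\poly(n)}$. Instead, the plan is to inspect the construction behind \citet[Theorem~6]{Ray21}: it reduces from a bounded\hyp occurrence (equivalently, bounded\hyp degree) NP\hyp hard gap problem, and its gadgets assign to each request a coordinate drawn from a fixed finite family of rationals whose denominators depend only on the constant occurrence/degree bound, together with perturbations of granularity $1/\poly(n)$; consequently all coordinates admit a common denominator $q\in\poly(n)$, which is exactly what the reduction above needs. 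I expect this bookkeeping --- tracing the precise form of the item sizes through Ray's gadgets and gap amplification --- to be the main obstacle. Should some step genuinely blow up the denominator, the fallback is to re\hyp run Ray's reduction over the scaled integer lattice from the outset, i.e.\ to build the \cref{dr:prob:2d-vbp-special} instance directly rather than as a post\hyp processing of a 2D-VBP instance, so that integrality and the polynomial weight bound hold by construction.
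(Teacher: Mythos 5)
Your proposal follows essentially the same route as the paper's proof: take Ray's reduction to $600/599$\hyp gap 2D-VBP, observe that its item sizes are rationals with a common denominator polynomial in the instance size, and clear denominators by scaling the requests and the bin capacity, which preserves feasible solutions and hence the gap. The bookkeeping you flag as the main obstacle indeed works out exactly as you anticipate: in Ray's construction every coordinate has the form $\frac15+\frac{a'}{5b}$, $\frac{3}{10}-\frac{a'}{5b}$, or $\frac35$ with $b=(64q)^4+15$, so the common denominator $D=10\bigl((64q)^4+15\bigr)$ and all numerators are polynomially bounded in the number of requests $n\geq 3q$.
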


\begin{proof}
  The NP\hyp hard
  maximum 3-dimensional matching problem (MAX-3-DM) is,
  given three sets
  $X=\{x_1,x_2\dots,x_q\}$, $Y=\{y_1,y_2\dots ,y_q\}$,
  and $Z=\{z_1,z_2,\dots,\allowbreak z_q\}$,
  and a set of triples~$T\subseteq X\times Y\times Z$,
  to find a set $T'\subseteq T$ of maximum cardinality
  such that each element of $X$, $Y$, and~$Z$
  occurs in at most one triple in~$T'$.
  \citet{Ray21} showed a reduction
  of MAX-3-DM to $600/599$\hyp gap 2D-VBP. 
  We modify his reduction so that it outputs instances of
  \cref{dr:prob:2d-vbp-special} instead,
  without changing optimum number of bins.
  
  To this end,
  we first briefly describe the reduction,
  omitting the correctness proof.
  Let $r:=64q$,
  and, for
  each $x_i\in X$ let $x'_i:=ir+1$,
  for each $y_i\in Y$ let $y'_i:=ir^2+2$,
  for each $z_i\in Z$ let $z'_i:=ir^3+4$,
  and for each triple $(x_i,y_j,z_k)\in T$,
  let $t'_{(i,j,k)}=r^4-kr^3-jr^2-ir+8$.
  Call the set of all these integers~$U'$.
  The key feature of this construction is that
  the four numbers~$x'_i$, $y'_j$, $z'_k$, and~$t'_{(i, j, k)}$
  sum up to exactly $b:=r^4+15$.
  The final 2D-VBP instance consists of the requests
  \[\Biggl(\frac15+\frac{a'}{5b},\frac3{10}-\frac{a'}{5b}\Biggr)\text{\quad for each }a'\in U'\]
  and at most $|T|+3q$ (the exact number is a technical detail)
  additional
  \emph{dummy} requests of the form
  \[\Biggl(\frac35,\frac35\Biggr).\]
  The bin capacity, by definition of 2D-VBP,
  can be considered as $(1,1)$.
  The key observation here is that a bin can fit four requests
  if and only if the requests
  were built from the integers~$x'_i$, $y'_j$, $z'_k$, and~$t'_{(i, j, k)}$,
  or, in other words, if the requests correspond to some triple in~$T$.

  All requests
  are vectors of rational numbers
  whose denominators have a common multiple
  \[D:=10\cdot (r^4+15)=10\cdot ((64q)^4+15)\]
  and whose numerators are all bounded by a polynomial of~$q$.
  Multiplying all requests by~$D$
  and setting a bin capacity of~$(D,D)$,
  we get an equivalent instance of \cref{dr:prob:2d-vbp-special}
  in which all numbers are natural
  and bounded by a polynomial in~$q$.
  Since the number~$n$ of requests in it is at least~$3q$,
  all numbers are bounded polynomially in the number~$n$ of requests.
  Thus,
  Ray's result holds for \cref{dr:prob:2d-vbp-special}.
\end{proof}

\noindent
\cref{dr:lemma:2d-vbp-is-hard}
makes \cref{dr:prob:2d-vbp-special} fundamentally different
from standard Bin Packing,
whose 3/2-gap version is NP\hyp hard,
but only weakly so ---
if the item sizes are bounded by a polynomial
in the number of items,
standard Bin Packing is polynomial\hyp time solvable.

The polynomial bound on the request sizes plays a crucial role
in bounding the number of request flavors~$\numflavors$ in the following result:

\begin{lemma}
    \label{dr:lemma:alpha-gap-and-composition}
    The $\alpha$\hyp gap version of \cref{dr:prob:2d-vbp-special}
    AND\hyp composes into $\alpha$\hyp gap DVBP
    parameterized by $\mc+\numflavors$.
\end{lemma}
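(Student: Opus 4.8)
\noindent
I would use the time dimension of DVBP to concatenate the inputs. The plan is to place instance~$x_i$ into its own time window, disjoint from the windows of the other instances, so that requests of distinct instances are never simultaneously active; then one common collection of bins can serve every instance independently and the combined optimum is the maximum of the individual optima. As the polynomial equivalence relation I take ``both strings are invalid, or both are valid instances with the same number~$n$ of requests, the same bin capacity~$b$, and the same value~$\rho:=\min(r,n)$''. Since in a poly\hyp weight instance $n$ and both components of~$b$ are bounded by a fixed polynomial in~$n\le|x|$, and $\rho\le n$, this relation is polynomial\hyp time decidable and has polynomially many classes on every finite set, as an AND\hyp composition requires. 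Moreover, for each~$x_i$, deciding $\OPT(x_i)\le r$ is equivalent to deciding $\OPT(x_i)\le\rho$ and the gap promise carries over to~$\rho$, because a feasible instance can always be packed into~$n$ bins. So I may assume $x_1,\dots,x_s$ all share $n$, $b$, and the threshold~$\rho\le n$. The composition first scans the~$x_i$: if some request of some~$x_i$ exceeds~$b$, that~$x_i$ is infeasible, hence a no\hyp instance, so it outputs a fixed canonical no\hyp instance of DVBP; otherwise every~$x_i$ is feasible.

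In the remaining case, the composition outputs the DVBP instance~$x^*$ with $d=2$, bin capacity~$b$, threshold~$\rho$, and, for every $i\in\{1,\dots,s\}$ and every request~$a$ of~$x_i$, a request of flavor~$a$ with start time~$i$ and end time~$i+1$. The windows $[i,i+1)$ contain pairwise disjoint sets of integer time instants, so at time instant~$i$ the active requests are exactly those of~$x_i$ and at every other time instant none are active; hence a partition of the requests of~$x^*$ into bins is feasible if and only if, for each~$i$, its restriction to the requests of~$x_i$ is a feasible packing of~$x_i$, which gives $\OPT(x^*)=\max_i\OPT(x_i)$. Using the gap promise ($\OPT(x_i)\le\rho$ or $\OPT(x_i)>\alpha\rho$ for each~$i$), this yields $\OPT(x^*)\le\rho$ when all~$x_i$ are yes\hyp instances and $\OPT(x^*)>\alpha\rho$ as soon as one~$x_i$ is a no\hyp instance. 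Thus~$x^*$ meets the $\alpha$\hyp gap promise and is a yes\hyp instance of $\alpha$\hyp gap DVBP if and only if every~$x_i$ is a yes\hyp instance of $\alpha$\hyp gap \cref{dr:prob:2d-vbp-special}; and $x^*$ is clearly constructed in time polynomial in $\sum_i|x_i|+\log s$.

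It remains to bound the parameter~$\mc+\numflavors$, which is also the only step that is not bookkeeping. Since at every time instant only the $n$ requests of a single~$x_i$ are active, $\mc(x^*)=n\le\max_i|x_i|$. Every request of every~$x_i$ is a vector in $\{0,\dots,b_1\}\times\{0,\dots,b_2\}$, so the number of distinct flavors obeys $\numflavors(x^*)\le(b_1+1)(b_2+1)$, which by the poly\hyp weight property of \cref{dr:prob:2d-vbp-special} is polynomial in~$n$, hence in $\max_i|x_i|$. Therefore $\mc(x^*)+\numflavors(x^*)\in\poly(\max_i|x_i|)\subseteq\poly(\max_i|x_i|+\log s)$. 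This flavor bound is the crux and the reason \cref{dr:lemma:2d-vbp-is-hard} is needed: if request sizes could be arbitrary, as in ordinary 2D\hyp VBP, the flavors accumulated from the $s$ instances could number $\Theta(sn)$, which is not polynomial in $\max_i|x_i|+\log s$, and the composition would fail; forcing all $s$ instances through one shared, polynomially\hyp bounded capacity~$b$ confines their requests to a fixed flavor universe whose size does not depend on~$s$.
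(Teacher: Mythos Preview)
Your proof is correct and follows essentially the same approach as the paper: place the $s$~instances into pairwise disjoint time windows so that $\OPT(x^*)=\max_i\OPT(x_i)$, and bound $\mc$ by~$n$ and $\numflavors$ by a polynomial in~$n$ using the poly\hyp weight assumption. Your treatment is in fact slightly more careful than the paper's---you replace~$r$ by $\rho=\min(r,n)$ to ensure the equivalence relation has polynomially many classes even when $r$ is encoded in binary, and you dispose of infeasible inputs explicitly---but these are refinements of the same argument, not a different route.
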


\begin{proof}
    Assume instances~$I_1,I_2,\dots,I_s$ of the
    \alphagap{} version of \cref{dr:prob:2d-vbp-special}.
	Without loss of generality,
	we may assume that each of them consists of the same number~$n$
	of requests,
	each is asking for the same number~$r$ of bins
    and each instance's bin size is~$b$
	(since this is a polynomial equivalence relation).
	As a result,
	the entries of all vectors are bounded by $\poly(n)$.

	One now can create a DVBP instance~$I$
        consisting of all requests
        of all instances~$I_i$ for $i\in\{1,\dots,s\}$,
        assigning the requests of instance~$I_i$
	the start time~$2i-1$ and the end time~$2i$.
	Obviously,
	the created DVBP instance~$I$
	can %
	fit into $r$~bins if and only if each
	of the instances~$I_i$ can fit into $r$~bins.
	Moreover,
	if
	at least one of the instances~$I_i$ requires more than~$r$ bins,
	then it requires at least $\alpha r$ bins,
	meaning that $I$~requires at least $\alpha r$ bins.

	Also note that the maximum number of requests active
	at any time in~$I$ is $n$ and that
        the number of flavors in~$I$
        is bounded from above by $\poly(n)$.
	That is,
	$\mc+\numflavors \leq \poly(n)\leq\poly(\max_{i=1}^s|I_s|)+\log s$;
	we thus built a valid AND-composition.
\end{proof}

\noindent
Finally,
\cref{dr:thm:no-alpha-gap-polykern} now follows
from \cref{dr:prop:crosscomp,dr:lemma:alpha-gap-and-composition}
and the NP\hyp hardness of
the $\alpha$-gap variant of \cref{dr:prob:2d-vbp-special}
for all $\alpha\leq 600/599$ (\cref{dr:lemma:2d-vbp-is-hard}).

\section{Exact data reduction}
\label{sec:ilp}
\noindent
Due to the results in the previous section,
we do not expect $(1+\varepsilon)$\hyp approximate
preprocessing for DVBP
for arbitrarily small~$\varepsilon>0$
with provable \emph{a priori size} bounds of the output.
This, however,
does not prevent us from designing
data reduction algorithms 
that show \emph{good a posterior} results in experiments.

In this section,
we describe a simple data reduction rule
that maintains optimality of solutions.

\subsection{DVBP with multiplicities}
\noindent
Motivated by the following theorem,
which we prove in this section,
our first data reduction approach
will focus on lowering the number~$\ntypes$ of request types.

\begin{theorem}\label[theorem]{thm:ilp}
  DVBP with $k$~bins and $\ntypes$~request types
  is solvable in $2^{O(\ntypes k\log \ntypes k)}\cdot \poly(n)$~time.
\end{theorem}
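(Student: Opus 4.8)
The plan is to reduce DVBP with $\ntypes$ request types and $k$ bins to an integer linear program (ILP) with few variables, and then apply an ILP feasibility algorithm whose running time is single-exponential in the number of variables (Lenstra's algorithm, or more precisely the improved bounds of Kannan / Frank--Tardos). The key structural observation is that a bin is fully described, up to the identity of the requests in it, by how many requests of each type it contains: because requests of the same type have identical vectors and identical start/end times, the capacity constraint $\sum_{i\in B_j,\,\ri i\le t<\di i}\ai i\le b$ at every time instant~$t$ depends on $B_j$ only through the multiplicities of the types active at~$t$. Hence we may speak of the \emph{configuration} of a bin: a vector $c\in\N^{\ntypes}$ listing, for each type~$\nu$, the number of requests of type~$\nu$ placed in that bin.

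First I would enumerate the set~$\mathcal C$ of all \emph{feasible} configurations. A configuration~$c$ is feasible if, for every relevant time instant~$t$, the component-wise sum $\sum_{\nu\text{ active at }t} c_\nu\cdot a_\nu\le b$ holds; since only types active at~$t$ matter and at most $\mc\le k\cdot(\text{something})$—more usefully, each type contributes at most $b$-many copies per dimension so each $c_\nu$ is polynomially bounded—one checks that $|\mathcal C|\le (n+1)^{\ntypes}$, i.e.\ $2^{O(\ntypes\log n)}$. Actually a cleaner bound: in any bin, the number of requests of a fixed type is at most~$n$, so $|\mathcal C|\le (n+1)^{\ntypes}$. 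Then I introduce an integer variable $x_c\ge 0$ for each $c\in\mathcal C$, representing how many bins receive configuration~$c$. Feasibility of packing into~$k$ bins becomes the ILP: $\sum_{c\in\mathcal C} x_c\le k$, and for each type~$\nu$, $\sum_{c\in\mathcal C} c_\nu\, x_c = n_\nu$ where $n_\nu$ is the number of requests of type~$\nu$ in the instance. This ILP has $|\mathcal C|\le 2^{O(\ntypes\log n)}$ variables and $\ntypes+1$ constraints, which is too many variables for a direct single-exponential-in-variables bound; so the real work is to cut the number of configurations down.

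The main obstacle—and the step I expect to be delicate—is bounding the number of \emph{distinct} configurations that need to appear in an \emph{optimal} solution, rather than the number of all feasible configurations. Here I would invoke the standard configuration-ILP trick from bin packing with few item types (as in the Eisenbrand--Shmonin / Goemans--Rothvoss line of work, and used e.g.\ for bin packing parameterized by the number of distinct item sizes): by an extremal/Carathéodory-type argument, there is always an optimal fractional, hence—after a more careful integral argument—an optimal integral solution supported on only $2^{O(\ntypes\log\ntypes)}$ configurations, and moreover each such configuration has entries bounded by $2^{O(\ntypes\log\ntypes)}$. Restricting~$\mathcal C$ to configurations with entries bounded by this quantity leaves $2^{O(\ntypes^2\log\ntypes)}$ candidate configurations; one then guesses the (polynomially-in-$\ntypes$-many) support and solves an ILP in $2^{O(\ntypes\log\ntypes k)}$ variables. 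Running Lenstra-type ILP feasibility on this yields time $2^{O(\ntypes k\log\ntypes k)}\cdot\poly(n)$, since the number of variables is $2^{O(\ntypes\log\ntypes k)}$ and each coefficient has bit-length $\poly(\log n,\ntypes\log\ntypes k)$; iterating over the (at most~$n$) possible values of~$k$ absorbs into the $\poly(n)$ factor. The one subtlety to get right is that the per-time-instant capacity constraints, unlike classical bin packing, are not a single knapsack constraint per bin but a family of them; however, since all requests of one type share the same active interval, the feasibility of a configuration is still checkable in polynomial time and the extremal argument over the type-multiplicity lattice goes through unchanged, so I expect this to be a routine (if notation-heavy) adaptation rather than a genuine difficulty.
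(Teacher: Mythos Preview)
Your proposal takes a configuration-ILP route with Carath\'eodory-type support bounds, but the paper's proof is far more direct and avoids all of that machinery. The paper simply introduces integer variables $x_{ij}$ for $1\le i\le\ntypes$ and $1\le j\le k$, where $x_{ij}$ counts how many requests of type~$i$ go into bin~$j$, together with $k$ binary variables~$y_j$. This gives an ILP with $O(\ntypes k)$ variables (and polynomially many constraints once times are compressed via \cref{dr:thm:tcri}), and the bound $2^{O(\ntypes k\log\ntypes k)}\cdot\poly(n)$ falls out immediately from Lenstra/Kannan. No enumeration of configurations, no support-size argument, and no guessing are needed: the key structural observation you made---that requests of the same type are interchangeable---is exploited directly by letting $x_{ij}$ be integer rather than Boolean.

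Your own argument, as written, also has a genuine gap in the running-time calculation. You assert that the final ILP has ``$2^{O(\ntypes\log\ntypes k)}$ variables'' and that Lenstra on this yields $2^{O(\ntypes k\log\ntypes k)}$ time. But Lenstra/Kannan runs in time $p^{O(p)}$ for $p$ variables, so $p=2^{O(\ntypes\log\ntypes k)}$ would give a \emph{doubly} exponential bound, not the claimed one. If instead you meant to guess a support of size $s=\poly(\ntypes)$ and then solve an ILP with only $s$ variables, the per-guess cost is fine, but the number of guesses over $2^{O(\ntypes^2\log\ntypes)}$ candidate configurations is $2^{\poly(\ntypes)}$, and in any case the specific dependence on~$k$ you state is never derived. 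The direct assignment ILP with $\ntypes k$ integer variables is both simpler and immediately delivers the stated bound.
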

\noindent
To prove the theorem,
we follow an approach of \citet{FGR11}
for deriving algorithms for packing problems
parameterized by the ``number of numbers''.
Specifically,
we modify an ILP model for DVBP of \citet[Section~3.1]{DFI20},
replacing Boolean variables by integer variables
so as to allow for requests with multiplicities.
The number of variables in the ILP will be $O(\ntypes k)$,
so that \cref{thm:ilp}
immediately follows from a result of \citet{Len83} and \citet{Kan87}.

To state the ILP,
let $T:=\max_{i=1}^n\di{n}$~be the latest time that any request ends,
$S_t:=\{ j\mid \ri j\leq t<\di j\}$ be the requests active at time~$t\in\{0,\dots,T\}$, and
let $n^{(i)}$~be the number of requests of type $i$, $1\leq i\leq\ntypes$.
The model involves the following variables:
\begin{description}
\item[$x_{ij}$] for $1\leq i\leq \ntypes$ and $1\leq j\leq k$ is
  the number of requests of type~$i$ packed into bin~$j$.
\item[$y_j$] for $1\leq j\leq k$ equals one if bin~$j$ is used
  during at least one time instant and zero otherwise.
\end{description}
Then,
DVBP is modeled using the following ILP.
  \allowdisplaybreaks
\begin{align}
  \min&\sum_{j=1}^ky_j&&\text{s.\,t.}\notag{}\\
  \sum_{j=1}^k x_{ij} &= n^{(i)} &&  1\leq i\leq \ntypes,\label{eq:assign}\\
  x_{ij} & \leq n^{(i)}y_j, && 1\leq i\leq \ntypes, 1\leq j\leq k,\label{eq:activate}\\
  \sum_{i\in S_t}x_{ij}\ai{i}&\leq y_jb, && 1\leq t\leq T, 1\leq j\leq \ntypes \label{eq:resource}\\
  y_j&\in \{0, 1\},&&1\leq j\leq k\notag{}\\
  x_{ij}&\in\mathbb N, &&1\leq i\leq \ntypes, 1\leq j\leq j.\notag{}
\end{align}
Herein,
constraint \eqref{eq:assign} ensures that each request is assigned
to some bin,
constraint \eqref{eq:activate}
makes sure that bin~$j$ is counted as ``used''
whenever some request~$i$ is assigned to it,
and constraint \eqref{eq:resource}
ensures that the bin capacity~$b$ is not exceeded
at any time instant~$t$.

\subsection{Time compression}
\label{dr:timecomp}
\noindent
The number of variables in the above ILP is $O(\ntypes k)$,
whereas the number of constraints is $O(Tr+\ntypes k)$.
It is therefore desirable
to reduce~$\ntypes$ and~$T$.
Since requests of the same flavor and same start and end times
are of the same type,
both reduction of~$\ntypes$ and~$T$ is achieved by the following
simple approach.
  
\begin{proposition}[\citep{BMNW15}]
  \label{dr:thm:tcri}
  In $O(n\log n)$~time,
  the start and end points of all requests
  can be moved to an interval~$\{1,\dots,T\}$
  with minimum~$T$
  maintaining all pairwise intersections
  of request intervals.
\end{proposition}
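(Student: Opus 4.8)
The plan is to recognize that ``maintaining all pairwise intersections'' means exactly ``preserving the interval graph'': requests $i$ and $j$ are active at a common time instant iff $\max(s^{(i)},s^{(j)})<\min(e^{(i)},e^{(j)})$, so the set of intersecting pairs is precisely the edge set of the interval graph $G$ of the intervals $[s^{(i)},e^{(i)})$. Hence the task becomes: given one interval representation of $G$, produce another one with all endpoints in $\{1,\dots,T\}$, with each request still active at some instant (i.e. $s^{(i)}<e^{(i)}$), and with $T$ minimum. I would then show that the minimum $T$ is governed by the number $C$ of maximal cliques of $G$ (which is an invariant of $G$, independent of the representation).

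For the \emph{upper bound}, I would use the standard fact that the maximal cliques of an interval graph admit a linear order $K_1\prec\dots\prec K_C$ such that, for every request $i$, the cliques containing $i$ form a contiguous block $K_{p_i},\dots,K_{q_i}$. Relabel by setting $s^{(i)}:=p_i$ and $e^{(i)}:=q_i+1$. Then $i,j$ overlap in the new representation iff $[p_i,q_i]\cap[p_j,q_j]\neq\emptyset$, iff $i$ and $j$ share a maximal clique, iff they overlapped originally; every endpoint lies in $\{1,\dots,C+1\}$ and every request stays active at some instant since $p_i\le q_i$. For the matching \emph{lower bound}, take an arbitrary valid relabeling. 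Since the requests of a maximal clique $K$ pairwise overlap, by the Helly property for intervals they share a common integer instant $t_K$; by maximality of $K$, the set of requests active at $t_K$ is exactly $K$, so $K\mapsto t_K$ is injective and picks out $C$ distinct integers, each at least $1$. Their maximum is thus at least $C$, and any request of the clique witnessed there is active at that instant and therefore ends at time at least $C+1$. Hence every valid relabeling has $T\ge C+1$, so the minimum is $C+1$ (up to the trivial off-by-one stemming from the convention that the new coordinates start at $1$).

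To obtain the $O(n\log n)$ bound, I would sort the $2n$ endpoints (the only superlinear step) and sweep left to right, maintaining the current active set: a new maximal clique is detected exactly at a coordinate where, after inserting all requests that start there, the next event is a departure; during the same sweep one records, for each request, the indices of the first and the last maximal clique it belongs to. Relabelling each request by these two indices as above then takes linear time, so sorting dominates and the whole procedure runs in $O(n\log n)$. (Requests that are never active can be removed up front, or placed arbitrarily, without affecting $T$.)

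The main obstacle is the lower bound on $T$: one must argue that every admissible relabeling is \emph{forced} to spend a distinct integer coordinate on each maximal clique, plus one further coordinate for the last departure. This needs the Helly property and the consecutive-cliques structure of interval graphs, together with some care about the half-open convention $s^{(i)}\le t<e^{(i)}$ and about degenerate requests. The upper-bound construction and the running-time analysis are then routine.
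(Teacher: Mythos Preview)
The paper does not give its own proof of this proposition; it is stated with a citation to \cite{BMNW15} and used as a black box. Your argument is correct and is essentially the standard proof of the result: reduce to the clique path of the interval graph, relabel each request by the indices of the first and last maximal clique containing it, and observe that the number~$C$ of maximal cliques is both achievable (giving $T=C+1$) and necessary via Helly plus maximality. The sweep that detects a maximal clique exactly when an arrival event is immediately followed by a departure, after an $O(n\log n)$ sort of the endpoints, is the right implementation and matches the running time claimed. There is nothing to correct; this is the intended argument behind the cited reference, and it is in fact slightly sharper than what the present paper needs (in \cref{dr:thm:mc} only the cruder bound $T\le 2n$ is used).
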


\noindent
In \cref{dr:exp},
we will see that this data reduction
alone reduces the number of request types by about one third
in real problem instances.

\section{\boldmath$(1+\varepsilon)$\hyp approximate data reduction}
\label{sec:dr}
\label{dr:numa_approx}
\noindent
In this section,
we show a $(1+\varepsilon)$-approximate data reduction algorithm
for DVBP,
that is,
the cost of solutions may increase at most $(1+ \varepsilon)$~times.

In view of the hardness results in \cref{dr:sec:hard},
we do not expect to prove meaningful a priori size bounds of the output.
Indeed, as we will see,
maximizing the data reduction effect of the algorithm is
an NP\hyp hard subproblem itself.
Therefore,
we first describe the high\hyp level algorithm and then,
in subsequent subsections,
describe heuristics for its effective execution.
The data reduction effect will be empirically evaluated
in \cref{dr:exp}.

\subsection{A \texorpdfstring{$(1+\varepsilon)$}{}-approximate data reduction rule}
\label{dr:numa_approx1}
\noindent
Let $L$~be
a lower bound on the number of bins required
to accommodate all requests.
For example,
assuming that all request start and end times
are in $\{1,\dots,2n\}$
by \cref{dr:thm:tcri}, 
one can take
\begin{align}
S_t&:=\{ j\mid \ri j\leq t<\di j\},\notag{}\\
L&:=\max_{t\in\{1,\dots,2n\}}\max_{i\in\{1,\dots,d\}} \sum_{j\in S_t}^n \frac{\ai j_i}{b_i}.\label{dr:L}
\end{align}

\noindent

\begin{rrule}
  {\label[rrule]{dr:rrule:epsL}}
  Delete an arbitrary set of requests
  that can be packed into $\lfloor\varepsilon L\rfloor$ bins.
\end{rrule}

\begin{theorem}
  \cref{dr:rrule:epsL} is a
  $(1+\varepsilon)$\hyp preprocessing for DVBP.
\end{theorem}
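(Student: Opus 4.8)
The plan is to check, with $\alpha:=1+\varepsilon$, the two requirements in the definition of an $\alpha$\hyp approximate preprocessing. The first algorithm is \cref{dr:rrule:epsL} itself: it picks a set~$R$ of requests together with a packing of~$R$ into at most $\lfloor\varepsilon L\rfloor$ bins and outputs the instance~$I'$ obtained by deleting~$R$. This runs in polynomial time and is always applicable, since $R=\emptyset$ trivially qualifies. For the second algorithm I would take a $\beta$\hyp approximate solution of~$I'$, that is, a packing of the requests of~$I'$ into $k'\le\beta\cdot\OPT(I')$ bins, place it next to the witnessing packing of~$R$ on $\lfloor\varepsilon L\rfloor$ disjoint additional bins, and return the combined partition of the requests of~$I$ into $k'+\lfloor\varepsilon L\rfloor$ bins. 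This combined packing is feasible: for every time instant~$t$ and every bin, the relevant capacity constraint coincides with one of the two constraints already satisfied, because no bin receives requests from both parts.

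It then remains to show $k'+\lfloor\varepsilon L\rfloor\le(1+\varepsilon)\beta\cdot\OPT(I)$. I would use two facts. First, deleting requests cannot increase the optimum, so $\OPT(I')\le\OPT(I)$ and hence $k'\le\beta\cdot\OPT(I')\le\beta\cdot\OPT(I)$. Second, $L$ is a lower bound on~$\OPT(I)$; for the concrete choice in~\eqref{dr:L} this is the standard load argument, since at any fixed time instant the total demand in any coordinate~$i$ is split among the bins in use, each absorbing at most~$b_i$. Hence $\lfloor\varepsilon L\rfloor\le\varepsilon L\le\varepsilon\cdot\OPT(I)\le\varepsilon\beta\cdot\OPT(I)$, using $\beta\ge1$. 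Adding the two bounds yields the claim, so the recovered solution is $(1+\varepsilon)\beta$\hyp approximate.

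I do not expect a genuine obstacle here; the argument is short. The three points that need (one\hyp line) care are the feasibility of the merged packing --- which hinges only on the two parts living on disjoint bins --- the monotonicity $\OPT(I')\le\OPT(I)$ under request removal, and the inequality $L\le\OPT(I)$. The last of these is the only place where the specific lower bound matters: a quantity~$L$ that is not actually a valid lower bound on the required number of bins would break the estimate, so the proof should explicitly invoke that $L$ lower\hyp bounds $\OPT(I)$, e.g.\ via~\eqref{dr:L}.
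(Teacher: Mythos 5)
Your proposal is correct and follows essentially the same route as the paper: deletion cannot increase the optimum ($\OPT(I')\le\OPT(I)$), the deleted requests are re\hyp added on $\lfloor\varepsilon L\rfloor$ extra bins, and $L\le\OPT(I)$ gives $\beta\OPT(I')+\lfloor\varepsilon L\rfloor\le(1+\varepsilon)\beta\OPT(I)$. You merely make explicit two points the paper leaves implicit (feasibility of the merged packing on disjoint bins, and the validity of the lower bound~$L$ via the load argument for~\eqref{dr:L}), which is fine.
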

\begin{proof}
  Let $I$~be the DVBP instance before
  and $I'$~be the DVBP instance after applying Reduction Rule~\ref{dr:rrule:epsL}.
  If all requests of~$I$ can be packed into $k$~bins,
  then the requests of~$I'$ also can.
  Thus
  \(
  \OPT(I')\leq\OPT(I).
  \)
  If all requests of~$I'$ can be packed into $k$~bins,
  then the requests of~$I$ can be packed into $k+\lfloor\varepsilon L\rfloor$~bins.
  Thus,
  any $\alpha$\hyp approximate solution for~$I'$
  can be turned into an $\alpha(1+\varepsilon)$\hyp approximate
  solution for~$I$ since
  \begin{align*}
    \alpha\OPT(I')+\varepsilon L&\leq \alpha\OPT(I)+\varepsilon 
                                  \OPT(I)\\
    &\leq \alpha(1+\varepsilon)\OPT(I).\qedhere
  \end{align*}
\end{proof}

\noindent
In order to delete
the maximum number of requests using \cref{dr:rrule:epsL},
one has to pack a maximum number of requests into $\lfloor\varepsilon L\rfloor$~bins,
which is an NP\hyp hard task.
One possible implementation of \cref{dr:rrule:epsL}
is solving this task heuristically.
Indeed,
our experiments in \cref{dr:exp} show that,
solving the task exactly would not significantly
increase the effect of \cref{dr:rrule:epsL}
on our real\hyp world data.

\subsection{Greedily packing $\varepsilon L$ bins}
\label{dr:one_numa_bin}
\noindent
In order to pack $\varepsilon L$~bins,
we repeatedly apply a greedy algorithm for packing one bin.

In order to pack one bin,
we assign each request a priority,
then iterate over all not yet packed
requests by non\hyp increasing priority
and,
if the request still fits into the bin,
we pack it.
In detail, the procedure is described in \cref{algo:greed},
where the array~$S$ maintains the available bin space at each time instant
and $A$~is the set of requests packed into the bin.

\begin{algorithm}[t]
    \SetKwInOut{KwIn}{Input}
    \SetKwInOut{KwOut}{Result}
    \KwIn{A bin capacity $b\in\N^d$,
          \emph{requests}~$\ai {1},\dots,\ai {n}\in \N^d$
          with \break start times~$\ri 1,\dots,\ri n\in\{1,\dots,T\}$,
          \break end times~$\di 1,\dots,\di n\in\{1,\dots,T\}$,
          \break and priorities~$f:\{1,\dots,n\}\mapsto\mathbb{Q}$.}
    \KwOut{Indices of packed requests.}

    $A\gets \emptyset$\;
    $S\gets [b,b,\dots,b]$ of size~$T$\;

    \For{$\ai{i}$ in non-increasing order of  $f(i)$} {\label{algo:greed:loop}
        \If{$\ai{i}\leq S[t]$ for $t\in\{\ri{i},\dots,\di{i}-1\}$} {
            $A\gets A\cup\{i\}$\;
            \For{$t\in\{\ri{i},\dots,\di{i}-1\}$} {
                $S[t]\gets S[t]-\ai{i}$\;
            }
        }
    }
    \Return{$A$}\;

    \caption{Greedy packing into one bin.}\label{algo:greed}
\end{algorithm}

The priorities 
are chosen as follows.
Since the goal of the algorithm
is to pack as many requests as possible,
it makes sense to first pack those requests
that require less %
space and block less time instants.
A convenient value to use
as an indicator of a request's small size
is the number of requests of the same flavor
as $\ai{i}$ that can fit into one bin.
Thus,
a canonical priority assignment to each request~$i$ is
\[\alpha(i):=\min_{j\in\{1,\dots,d\}}\Biggl\lfloor \frac{b_j}{\ai{i}_j}\Biggr\rfloor.\]
From $\alpha(i)$,
we also derive the following priorities:
\[f_1(i):=\alpha(i)+\frac{1}{2(\di{i}-\ri{i})},\]
which uses the time span as a tie-breaker for $\alpha(i)$
and places the shorter spanned requests earlier;
and
\[f_2(i):=\alpha(i)\cdot \frac{T}{\di{i}-\ri{i}},\]
which is an upper bound %
on the number of requests of the same flavor
and life time as $\ai{i}$ that can be placed
over all time instants.

We found that~$f_2$ works
the best among $\alpha,f_1,f_2$,
which we evaluated as follows.

\subsection{Quality estimation of greedy packing}
\label{dr:quality-estimation}
\noindent
In the following,
we describe how to measure the effectivity
of our greedy packing algorithm,
thus ultimately answering the question
of how effectively we use \cref{dr:rrule:epsL}
and whether the effect of \cref{dr:rrule:epsL}
can be significantly increased
by solving the problem of packing into $\varepsilon L$
in a more sophisticated way,
or even optimally.

In order to estimate the effect of \cref{dr:rrule:epsL}
in comparison to its potential effect, we use
two \emph{utilization} values
comparing the number~$n$ of requests
in the input instance
to the number~$n'$ of requests in the output instance
(that is, $n-n'$ requests are deleted by \cref{dr:rrule:epsL}).
The first value is
\begin{align}
R&:=\frac{n-n'}{U}\leq1,\label{eq:R}
\end{align}
where~$U$ is an upper bound on
the number of requests removable by \cref{dr:rrule:epsL}.
The closer~$R$ is to~$1$,
the closer the data reduction effect is to the maximum possible.
The second value is
\begin{align}
  K&:=\frac{n'}{n-U}\geq1,
     \label{eq:K}
\end{align}
where $n-U$ is a lower bound on
the number of remaining requests~$n'$.
Low values of~$K$ mean
that the size of the reduced instance
is close to what it could be if we packed
the $\lfloor\varepsilon L\rfloor$ bins optimally.

\begin{algorithm}[t]
    \SetKwInOut{KwIn}{Input}
    \SetKwInOut{KwOut}{Result}
    \KwIn{A bin capacity $b\in\N^d$,
      \break
          \emph{requests} $\ai {1},\dots,\ai {n}\in \N^d$
          with
          \break start times $\ri 1,\dots,\ri n\in\{1,\dots,T\}$,
          \break end times $\di 1,\dots,\di n\in\{1,\dots,T\}$,
          \break and the number of bins $k\in\N$.}
    \KwOut{An upper bound $U$.}
    \SetKwFor{RepeatNTimes}{repeat}{times}{end}

    $S\gets{}$size-$d$-array of empty lists\;

    \For{any request type $(\ai{*},\ri{*},\di{*})$ of any multiplicity $m$} {\label{algo:removed-ub:loop-types}
        $\displaystyle p\gets\min\,\{m\}\cup\Biggl\{k\cdot\Biggl\lfloor \frac{b_j}{\ai{*}_j}\Biggr\rfloor\Biggm|j\in\{1,\dots,d\}\Biggr\}$\;\label{algo:removed-ub:fitting}
        \For{$j\in\{1,\dots,d\}$} {
            append $p$ times $\ai{*}_j$ to $S[j]$\;\label{algo:removed-ub:put}
        }
    }
    \For{$j\in\{1,\dots,d\}$} {
        sort $S[j]$ in non-decreasing order\;\label{algo:removed-ub:sort}
        $\displaystyle U_j\gets\max\,\Biggl\{q\leq n\Biggm|\sum_{t=1}^qS[j][t]\leq b_j\Biggr\}$\;\label{algo:removed-ub:find-q}
        \label{algo:removed-ub:relax}
    }

    \Return{$\min\,\{U_j\mid j\in\{1,\dots,d\}\}$}\;
    \caption{Computation of upper bound~$U$}\label{algo:removed-ub}
\end{algorithm}

After reducing an instance,
we can evaluate the effect of \cref{dr:rrule:epsL}
quantified by~$R$ and~$K$,
using the upper bound~$U$.
One way to obtain~$U$
is to apply %
\cref{algo:removed-ub} with $k=\lfloor\varepsilon L\rfloor$;
it works as follows.

In \cref{algo:removed-ub:loop-types},
it iterates over all request types,
and, for each request type and its multiplicity~$m$,
in \cref{algo:removed-ub:fitting},
computes the maximum number~$p$
of requests of this type that fit into $k$ bins.
In \cref{algo:removed-ub:put},
for each $j\in\{1,\dots,d\}$,
it builds a list~$S[j]$
containing $j$-th components of
this request time $p$~times,
because the other $m-p$ requests of the same type will
not fit into the bins anyway.
In \cref{algo:removed-ub:find-q},
for each $j\in\{1,\dots,d\}$,
it computes an upper bound~$U_j$
by greedily packing
the requests
with the smallest value in the $j$-th component first,
ignoring the   other components.
Finally,
in \cref{algo:removed-ub:relax},
the smallest of the upper bounds is returned.

\begin{table*}[t]
          \caption{Data reduction effect of \cref{dr:thm:tcri} and \cref{dr:rrule:epsL} with $\varepsilon=0.05$.
          Herein, $L$~is the lower bound \eqref{dr:L},
          $T$~is the number of time instants,
          $n$~is the number of requests,
          $\ntypes$ is the number of types,
          $R$ and~$K$ are the utilization measures \eqref{eq:R} and \eqref{eq:K},
          respectively.
      }
	\centering
	\begin{tabular}{ll|rrrrrr}
          \toprule
          &data set                             & $L$  & $T$        & $n$  & $\ntypes$  & $R$ & $K$ \\
		\midrule
        \multirow{3}{*}{\rotatebox{90}{\huawei{}}}  & initial                           & 814  & 152\,366 & 125\,430 & 111\,050 & & \\
         & after \cref{dr:thm:tcri}   &      &  29\,347 & 125\,430 &  78\,010 & & \\
         & after \cref{dr:rrule:epsL} &      &      356 &  10\,079 &   1\,798 & 0.969 & 1.576 \\
		\midrule
          \multirow{3}{*}{\rotatebox{90}{\azure{}}}    & initial                         & 116864 & 4\,650\,911 & 5\,559\,800 & 3\,792\,136 & & \\
           & after \cref{dr:thm:tcri}    &        &    695\,408 & 5\,559\,800 & 2\,271\,582 & & \\
         & after \cref{dr:rrule:epsL}  &        &     20\,226 & 1\,069\,118 &     81\,357 & 0.946 & 1.319 \\
		\bottomrule
	\end{tabular}
\label[table]{table:dataset-stats}
\end{table*}
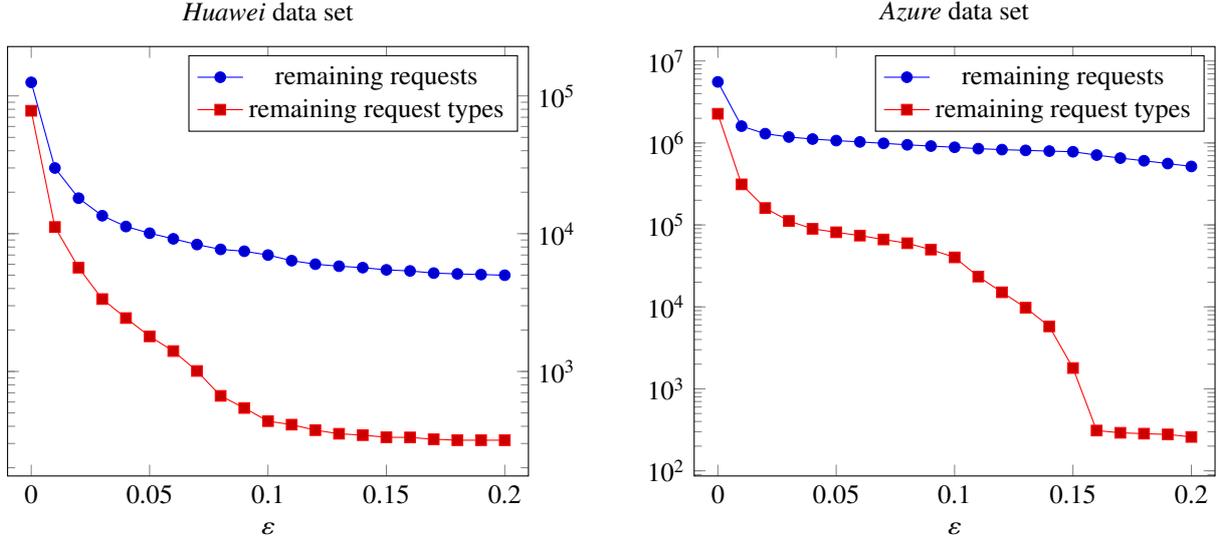
\begin{figure*}[t]
    \begin{tikzpicture}%
      \begin{axis}[xlabel=$\varepsilon$,
        title={\huawei{} data set},
          xtick={0,0.05,0.10,0.15,0.20},
          ylabel near ticks, yticklabel pos=right,
          xticklabel style={/pgf/number format/fixed,/pgf/number format/precision=2},
          ymode=log,
          xmin=-0.01,xmax=0.21
          ]
          \addplot table[col sep=comma,x expr=\thisrow{eps}*1e-2,y=join_num_items]{
initial_max_time,initial_max_clique,initial_num_items,initial_num_entries,initial_num_components,initial_num_entry_classes,shrinked_max_time,shrinked_max_clique,shrinked_num_items,shrinked_num_entries,shrinked_num_components,shrinked_num_entry_classes,eps,lb,num_servers,one_by_one_max_time,one_by_one_max_clique,one_by_one_num_items,one_by_one_num_entries,one_by_one_num_components,one_by_one_num_entry_classes,removed_utilization,left_utilization,join_max_time,join_max_clique,join_num_items,join_num_entries,join_num_components,join_num_entry_classes
152365,9496,125430,125430,1,111050,29346,9496,125430,78010,1,78010,0,814,0,29346,9496,125430,78010,1,78010,-nan,1,29346,9496,125430,78010,1,78010
152365,9496,125430,125430,1,111050,29346,9496,125430,78010,1,78010,1,814,8,2672,9188,29980,11177,1,11177,0.83778,2.60741,2672,9188,29980,11177,1,11177
152365,9496,125430,125430,1,111050,29346,9496,125430,78010,1,78010,2,814,16,1381,9031,18107,5669,1,5669,0.922209,1.99989,1381,9031,18107,5669,1,5669
152365,9496,125430,125430,1,111050,29346,9496,125430,78010,1,78010,3,814,24,700,8911,13500,3354,1,3354,0.952677,1.70025,700,8911,13500,3354,1,3354
152365,9496,125430,125430,1,111050,29346,9496,125430,78010,1,78010,4,814,32,479,8481,11270,2441,1,2441,0.96476,1.58732,479,8481,11270,2441,1,2441
152365,9496,125430,125430,1,111050,29346,9496,125430,78010,1,78010,5,814,40,355,8114,10079,1798,1,1798,0.969059,1.57583,355,8114,10079,1798,1,1798
152365,9496,125430,125430,1,111050,29346,9496,125430,78010,1,78010,6,814,48,265,7868,9174,1406,1,1406,0.972178,1.56901,265,7868,9174,1406,1,1406
152365,9496,125430,125430,1,111050,29346,9496,125430,78010,1,78010,7,814,56,179,7651,8340,1008,1,1008,0.975425,1.54731,179,7651,8340,1008,1,1008
152365,9496,125430,125430,1,111050,29346,9496,125430,78010,1,78010,8,814,65,115,7233,7693,665,1,665,0.977119,1.55855,115,7233,7693,665,1,665
152365,9496,125430,125430,1,111050,29346,9496,125430,78010,1,78010,9,814,73,91,7115,7457,543,1,543,0.976307,1.6232,91,7115,7457,543,1,543
152365,9496,125430,125430,1,111050,29346,9496,125430,78010,1,78010,10,814,81,70,6703,6997,436,1,436,0.977791,1.62456,70,6703,6997,436,1,436
152365,9496,125430,125430,1,111050,29346,9496,125430,78010,1,78010,11,814,89,64,6080,6370,411,1,411,0.980895,1.57245,64,6080,6370,411,1,411
152365,9496,125430,125430,1,111050,29346,9496,125430,78010,1,78010,12,814,97,56,5726,6005,375,1,375,0.982049,1.57117,56,5726,6005,375,1,375
152365,9496,125430,125430,1,111050,29346,9496,125430,78010,1,78010,13,814,105,53,5537,5806,353,1,353,0.981925,1.61099,53,5537,5806,353,1,353
152365,9496,125430,125430,1,111050,29346,9496,125430,78010,1,78010,14,814,113,52,5407,5669,345,1,345,0.98135,1.67079,52,5407,5669,345,1,345
152365,9496,125430,125430,1,111050,29346,9496,125430,78010,1,78010,15,814,122,50,5200,5461,333,1,333,0.981173,1.72871,50,5200,5461,333,1,333
152365,9496,125430,125430,1,111050,29346,9496,125430,78010,1,78010,16,814,130,50,5104,5365,332,1,332,0.980323,1.81557,50,5104,5365,332,1,332
152365,9496,125430,125430,1,111050,29346,9496,125430,78010,1,78010,17,814,138,48,4915,5174,322,1,322,0.980337,1.87328,48,4915,5174,322,1,322
152365,9496,125430,125430,1,111050,29346,9496,125430,78010,1,78010,18,814,146,48,4856,5107,317,1,317,0.979422,1.98022,48,4856,5107,317,1,317
152365,9496,125430,125430,1,111050,29346,9496,125430,78010,1,78010,19,814,154,48,4800,5051,317,1,317,0.978619,2.08633,48,4800,5051,317,1,317
152365,9496,125430,125430,1,111050,29346,9496,125430,78010,1,78010,20,814,162,48,4744,4995,317,1,317,0.977978,2.18791,48,4744,4995,317,1,317
};
\addlegendentry{remaining requests}
\addplot table[col sep=comma,x expr=\thisrow{eps}*1e-2,y=join_num_entries]{
initial_max_time,initial_max_clique,initial_num_items,initial_num_entries,initial_num_components,initial_num_entry_classes,shrinked_max_time,shrinked_max_clique,shrinked_num_items,shrinked_num_entries,shrinked_num_components,shrinked_num_entry_classes,eps,lb,num_servers,one_by_one_max_time,one_by_one_max_clique,one_by_one_num_items,one_by_one_num_entries,one_by_one_num_components,one_by_one_num_entry_classes,removed_utilization,left_utilization,join_max_time,join_max_clique,join_num_items,join_num_entries,join_num_components,join_num_entry_classes
152365,9496,125430,125430,1,111050,29346,9496,125430,78010,1,78010,0,814,0,29346,9496,125430,78010,1,78010,-nan,1,29346,9496,125430,78010,1,78010
152365,9496,125430,125430,1,111050,29346,9496,125430,78010,1,78010,1,814,8,2672,9188,29980,11177,1,11177,0.83778,2.60741,2672,9188,29980,11177,1,11177
152365,9496,125430,125430,1,111050,29346,9496,125430,78010,1,78010,2,814,16,1381,9031,18107,5669,1,5669,0.922209,1.99989,1381,9031,18107,5669,1,5669
152365,9496,125430,125430,1,111050,29346,9496,125430,78010,1,78010,3,814,24,700,8911,13500,3354,1,3354,0.952677,1.70025,700,8911,13500,3354,1,3354
152365,9496,125430,125430,1,111050,29346,9496,125430,78010,1,78010,4,814,32,479,8481,11270,2441,1,2441,0.96476,1.58732,479,8481,11270,2441,1,2441
152365,9496,125430,125430,1,111050,29346,9496,125430,78010,1,78010,5,814,40,355,8114,10079,1798,1,1798,0.969059,1.57583,355,8114,10079,1798,1,1798
152365,9496,125430,125430,1,111050,29346,9496,125430,78010,1,78010,6,814,48,265,7868,9174,1406,1,1406,0.972178,1.56901,265,7868,9174,1406,1,1406
152365,9496,125430,125430,1,111050,29346,9496,125430,78010,1,78010,7,814,56,179,7651,8340,1008,1,1008,0.975425,1.54731,179,7651,8340,1008,1,1008
152365,9496,125430,125430,1,111050,29346,9496,125430,78010,1,78010,8,814,65,115,7233,7693,665,1,665,0.977119,1.55855,115,7233,7693,665,1,665
152365,9496,125430,125430,1,111050,29346,9496,125430,78010,1,78010,9,814,73,91,7115,7457,543,1,543,0.976307,1.6232,91,7115,7457,543,1,543
152365,9496,125430,125430,1,111050,29346,9496,125430,78010,1,78010,10,814,81,70,6703,6997,436,1,436,0.977791,1.62456,70,6703,6997,436,1,436
152365,9496,125430,125430,1,111050,29346,9496,125430,78010,1,78010,11,814,89,64,6080,6370,411,1,411,0.980895,1.57245,64,6080,6370,411,1,411
152365,9496,125430,125430,1,111050,29346,9496,125430,78010,1,78010,12,814,97,56,5726,6005,375,1,375,0.982049,1.57117,56,5726,6005,375,1,375
152365,9496,125430,125430,1,111050,29346,9496,125430,78010,1,78010,13,814,105,53,5537,5806,353,1,353,0.981925,1.61099,53,5537,5806,353,1,353
152365,9496,125430,125430,1,111050,29346,9496,125430,78010,1,78010,14,814,113,52,5407,5669,345,1,345,0.98135,1.67079,52,5407,5669,345,1,345
152365,9496,125430,125430,1,111050,29346,9496,125430,78010,1,78010,15,814,122,50,5200,5461,333,1,333,0.981173,1.72871,50,5200,5461,333,1,333
152365,9496,125430,125430,1,111050,29346,9496,125430,78010,1,78010,16,814,130,50,5104,5365,332,1,332,0.980323,1.81557,50,5104,5365,332,1,332
152365,9496,125430,125430,1,111050,29346,9496,125430,78010,1,78010,17,814,138,48,4915,5174,322,1,322,0.980337,1.87328,48,4915,5174,322,1,322
152365,9496,125430,125430,1,111050,29346,9496,125430,78010,1,78010,18,814,146,48,4856,5107,317,1,317,0.979422,1.98022,48,4856,5107,317,1,317
152365,9496,125430,125430,1,111050,29346,9496,125430,78010,1,78010,19,814,154,48,4800,5051,317,1,317,0.978619,2.08633,48,4800,5051,317,1,317
152365,9496,125430,125430,1,111050,29346,9496,125430,78010,1,78010,20,814,162,48,4744,4995,317,1,317,0.977978,2.18791,48,4744,4995,317,1,317
};
          \addlegendentry{remaining request types}
        \end{axis}
    \end{tikzpicture}\hfill{}
    \begin{tikzpicture}%
        \begin{axis}[xlabel=$\varepsilon$, %
          xtick={0,0.05,0.10,0.15,0.20},
          title={\azure{} data set},
                     xticklabel style={/pgf/number format/fixed,/pgf/number format/precision=2},
                     ymax=15000000,
                     xmin=-0.01,xmax=0.21,
                     ymode=log,
                    ]
                    \addplot table[col sep=comma,x expr=\thisrow{eps}*1e-2,y=join_num_items]{
initial_max_time,initial_max_clique,initial_num_items,initial_num_entries,initial_num_components,initial_num_entry_classes,shrinked_max_time,shrinked_max_clique,shrinked_num_items,shrinked_num_entries,shrinked_num_components,shrinked_num_entry_classes,eps,lb,num_servers,one_by_one_max_time,one_by_one_max_clique,one_by_one_num_items,one_by_one_num_entries,one_by_one_num_components,one_by_one_num_entry_classes,removed_utilization,left_utilization,join_max_time,join_max_clique,join_num_items,join_num_entries,join_num_components,join_num_entry_classes
4650910,885538,5559800,5559800,1,3792136,695407,885538,5559800,2271669,1,2271582,0,116864,0,695407,885538,5559800,2271669,1,2271582,-nan,1,695407,885538,5559800,2271582,1,2271582
4650910,885538,5559800,5559800,1,3792136,695407,885538,5559800,2271669,1,2271582,1,116864,1168,88197,869093,1603981,312808,1,312683,0.951266,1.14462,88197,869093,1603981,312683,1,312683
4650910,885538,5559800,5559800,1,3792136,695407,885538,5559800,2271669,1,2271582,2,116864,2337,41878,855364,1297809,160814,1,160673,0.958472,1.16589,41878,855364,1297809,160673,1,160673
4650910,885538,5559800,5559800,1,3792136,695407,885538,5559800,2271669,1,2271582,3,116864,3505,28526,846431,1182221,111937,1,111789,0.957872,1.19454,28526,846431,1182221,111789,1,111789
4650910,885538,5559800,5559800,1,3792136,695407,885538,5559800,2271669,1,2271582,4,116864,4674,22198,838602,1116159,89788,1,89650,0.951337,1.25572,22198,838602,1116159,89650,1,89650
4650910,885538,5559800,5559800,1,3792136,695407,885538,5559800,2271669,1,2271582,5,116864,5843,20225,831836,1069118,81489,1,81357,0.945505,1.31942,20225,831836,1069118,81357,1,81357
4650910,885538,5559800,5559800,1,3792136,695407,885538,5559800,2271669,1,2271582,6,116864,7011,18553,827056,1030464,74225,1,74102,0.94152,1.37554,18553,827056,1030464,74102,1,74102
4650910,885538,5559800,5559800,1,3792136,695407,885538,5559800,2271669,1,2271582,7,116864,8180,16828,821255,989904,66569,1,66450,0.939289,1.42529,16828,821255,989904,66450,1,66450
4650910,885538,5559800,5559800,1,3792136,695407,885538,5559800,2271669,1,2271582,8,116864,9349,15227,814853,949332,59899,1,59783,0.938549,1.46623,15227,814853,949332,59783,1,59783
4650910,885538,5559800,5559800,1,3792136,695407,885538,5559800,2271669,1,2271582,9,116864,10517,12593,810599,916441,50115,1,50009,0.936339,1.52551,12593,810599,916441,50009,1,50009
4650910,885538,5559800,5559800,1,3792136,695407,885538,5559800,2271669,1,2271582,10,116864,11686,9943,807470,885970,40301,1,40204,0.933833,1.5969,9943,807470,885970,40204,1,40204
4650910,885538,5559800,5559800,1,3792136,695407,885538,5559800,2271669,1,2271582,11,116864,12855,5622,802925,852352,23542,1,23452,0.933703,1.64514,5622,802925,852352,23452,1,23452
4650910,885538,5559800,5559800,1,3792136,695407,885538,5559800,2271669,1,2271582,12,116864,14023,3530,798459,829503,15167,1,15078,0.932566,1.70171,3530,798459,829503,15078,1,15078
4650910,885538,5559800,5559800,1,3792136,695407,885538,5559800,2271669,1,2271582,13,116864,15192,2140,792918,811866,9864,1,9775,0.930882,1.7675,2140,792918,811866,9775,1,9775
4650910,885538,5559800,5559800,1,3792136,695407,885538,5559800,2271669,1,2271582,14,116864,16360,1220,787127,795908,5849,1,5781,0.929463,1.83229,1220,787127,795908,5781,1,5781
4650910,885538,5559800,5559800,1,3792136,695407,885538,5559800,2271669,1,2271582,15,116864,17529,346,779366,781324,1828,1,1795,0.928075,1.90104,346,779366,781324,1795,1,1795
4650910,885538,5559800,5559800,1,3792136,695407,885538,5559800,2271669,1,2271582,16,116864,18698,25,711800,711926,339,1,310,0.937298,1.83667,25,711800,711926,310,1,310
4650910,885538,5559800,5559800,1,3792136,695407,885538,5559800,2271669,1,2271582,17,116864,19866,21,653801,653922,319,1,291,0.944248,1.79522,21,653801,653922,291,1,291
4650910,885538,5559800,5559800,1,3792136,695407,885538,5559800,2271669,1,2271582,18,116864,21035,19,606649,606766,311,1,284,0.949053,1.78001,19,606649,606766,284,1,284
4650910,885538,5559800,5559800,1,3792136,695407,885538,5559800,2271669,1,2271582,19,116864,22204,19,560546,560663,305,1,278,0.95373,1.76237,19,560546,560663,278,1,278
4650910,885538,5559800,5559800,1,3792136,695407,885538,5559800,2271669,1,2271582,20,116864,23372,16,517028,517133,286,1,259,0.958748,1.72284,16,517028,517133,259,1,259
};
                    \addlegendentry{remaining requests};
                    \addplot table[col sep=comma,x expr=\thisrow{eps}*1e-2,y=join_num_entries]{
initial_max_time,initial_max_clique,initial_num_items,initial_num_entries,initial_num_components,initial_num_entry_classes,shrinked_max_time,shrinked_max_clique,shrinked_num_items,shrinked_num_entries,shrinked_num_components,shrinked_num_entry_classes,eps,lb,num_servers,one_by_one_max_time,one_by_one_max_clique,one_by_one_num_items,one_by_one_num_entries,one_by_one_num_components,one_by_one_num_entry_classes,removed_utilization,left_utilization,join_max_time,join_max_clique,join_num_items,join_num_entries,join_num_components,join_num_entry_classes
4650910,885538,5559800,5559800,1,3792136,695407,885538,5559800,2271669,1,2271582,0,116864,0,695407,885538,5559800,2271669,1,2271582,-nan,1,695407,885538,5559800,2271582,1,2271582
4650910,885538,5559800,5559800,1,3792136,695407,885538,5559800,2271669,1,2271582,1,116864,1168,88197,869093,1603981,312808,1,312683,0.951266,1.14462,88197,869093,1603981,312683,1,312683
4650910,885538,5559800,5559800,1,3792136,695407,885538,5559800,2271669,1,2271582,2,116864,2337,41878,855364,1297809,160814,1,160673,0.958472,1.16589,41878,855364,1297809,160673,1,160673
4650910,885538,5559800,5559800,1,3792136,695407,885538,5559800,2271669,1,2271582,3,116864,3505,28526,846431,1182221,111937,1,111789,0.957872,1.19454,28526,846431,1182221,111789,1,111789
4650910,885538,5559800,5559800,1,3792136,695407,885538,5559800,2271669,1,2271582,4,116864,4674,22198,838602,1116159,89788,1,89650,0.951337,1.25572,22198,838602,1116159,89650,1,89650
4650910,885538,5559800,5559800,1,3792136,695407,885538,5559800,2271669,1,2271582,5,116864,5843,20225,831836,1069118,81489,1,81357,0.945505,1.31942,20225,831836,1069118,81357,1,81357
4650910,885538,5559800,5559800,1,3792136,695407,885538,5559800,2271669,1,2271582,6,116864,7011,18553,827056,1030464,74225,1,74102,0.94152,1.37554,18553,827056,1030464,74102,1,74102
4650910,885538,5559800,5559800,1,3792136,695407,885538,5559800,2271669,1,2271582,7,116864,8180,16828,821255,989904,66569,1,66450,0.939289,1.42529,16828,821255,989904,66450,1,66450
4650910,885538,5559800,5559800,1,3792136,695407,885538,5559800,2271669,1,2271582,8,116864,9349,15227,814853,949332,59899,1,59783,0.938549,1.46623,15227,814853,949332,59783,1,59783
4650910,885538,5559800,5559800,1,3792136,695407,885538,5559800,2271669,1,2271582,9,116864,10517,12593,810599,916441,50115,1,50009,0.936339,1.52551,12593,810599,916441,50009,1,50009
4650910,885538,5559800,5559800,1,3792136,695407,885538,5559800,2271669,1,2271582,10,116864,11686,9943,807470,885970,40301,1,40204,0.933833,1.5969,9943,807470,885970,40204,1,40204
4650910,885538,5559800,5559800,1,3792136,695407,885538,5559800,2271669,1,2271582,11,116864,12855,5622,802925,852352,23542,1,23452,0.933703,1.64514,5622,802925,852352,23452,1,23452
4650910,885538,5559800,5559800,1,3792136,695407,885538,5559800,2271669,1,2271582,12,116864,14023,3530,798459,829503,15167,1,15078,0.932566,1.70171,3530,798459,829503,15078,1,15078
4650910,885538,5559800,5559800,1,3792136,695407,885538,5559800,2271669,1,2271582,13,116864,15192,2140,792918,811866,9864,1,9775,0.930882,1.7675,2140,792918,811866,9775,1,9775
4650910,885538,5559800,5559800,1,3792136,695407,885538,5559800,2271669,1,2271582,14,116864,16360,1220,787127,795908,5849,1,5781,0.929463,1.83229,1220,787127,795908,5781,1,5781
4650910,885538,5559800,5559800,1,3792136,695407,885538,5559800,2271669,1,2271582,15,116864,17529,346,779366,781324,1828,1,1795,0.928075,1.90104,346,779366,781324,1795,1,1795
4650910,885538,5559800,5559800,1,3792136,695407,885538,5559800,2271669,1,2271582,16,116864,18698,25,711800,711926,339,1,310,0.937298,1.83667,25,711800,711926,310,1,310
4650910,885538,5559800,5559800,1,3792136,695407,885538,5559800,2271669,1,2271582,17,116864,19866,21,653801,653922,319,1,291,0.944248,1.79522,21,653801,653922,291,1,291
4650910,885538,5559800,5559800,1,3792136,695407,885538,5559800,2271669,1,2271582,18,116864,21035,19,606649,606766,311,1,284,0.949053,1.78001,19,606649,606766,284,1,284
4650910,885538,5559800,5559800,1,3792136,695407,885538,5559800,2271669,1,2271582,19,116864,22204,19,560546,560663,305,1,278,0.95373,1.76237,19,560546,560663,278,1,278
4650910,885538,5559800,5559800,1,3792136,695407,885538,5559800,2271669,1,2271582,20,116864,23372,16,517028,517133,286,1,259,0.958748,1.72284,16,517028,517133,259,1,259
};
                    \addlegendentry{remaining request types};
        \end{axis}
    \end{tikzpicture}
    \caption{Data reduction effect of \cref{dr:thm:tcri} and \cref{dr:rrule:epsL}       for $\varepsilon\in[0,0.2]$ with step size~$0.01$.}
    \label{fig:plots-huawei-cloud}
    \label{fig:plots-azure}
\end{figure*}

\section{Experiments}
\label{dr:exp}
\noindent
We present an experimental evaluation of the effect of our data reduction
on real\hyp world instances for a virtual machine scheduling problem
(\cref{ex:vm-scheduling}).

\paragraph{Data sets}%
\looseness=-1
We evaluated our data reduction approach
on six data sets with similar results.
However,
since four of the data sets are confidential,
in detail
we can present results here only for
two openly available data sets.

The first one is ``Huawei-East-1''
released by Huawei \citep{SCC+22}.\footnote{https://github.com/huaweicloud/VM-placement-dataset}
We denote this data set \huawei{}.
In this data set, %
$d=2$,
that is,
each request has two resource requirements:
the number of requested CPU cores and the amount of allocated RAM.
As suggested by its authors,
we simulated CPU sharing for small virtual machines
(with flavors 2U4G, 4U8G, 8U16G, 1U2G, 4U16G, 1U1G, 2U8G, 8U32G and 1U4G,
where machine $X$U$Y$G has $X$~CPU cores and $Y$~GB of RAM)
by dividing their CPU demand by~3.
We set the bin size to 40 CPU cores and 90 GB of RAM,
following the data set authors' experimental setup.
All arithmetic calculations for this data set
are carried out with 64-bit integers.

The second data set, called \azure{},
is released by Microsoft and
is called ``Azure Traces for Packing 2020'' \citep{HMM+20}.\footnote{https://github.com/Azure/AzurePublicDataset}
This data set has $d=5$,
where each request has requirements on
the number of CPU cores,
amount of RAM, HDD and SSD space,
and network bandwidth.
In the \azure{} dataset,
each resource requirement
is a fractional number in~$[0,1]$
equal to the fraction of the available resource on the servers, %
so the bin size is set to $(1,1,1,1,1)$.
All computations for this data set are done in
double precision floating point arithmetic.

\paragraph{Reduction effect}

In this section,
we present experimental results of the $(1+\varepsilon)$-approximate
Reduction Rule~\ref{dr:rrule:epsL}.
We perform \cref{dr:thm:tcri}
before~\cref{dr:rrule:epsL}
and also during execution of \cref{dr:rrule:epsL}
after each packed and deleted bin.

Allowing an increase in the number of servers
in an optimal solution by no more than $5\,\%$,
we perform \cref{dr:rrule:epsL}
with $\varepsilon=0.05$.
\cref{table:dataset-stats} shows
the results.
When computing the optimal solution lower bound~$L$
according to (\ref{dr:L}),
\cref{dr:rrule:epsL} tries to pack into
$\lfloor\varepsilon L\rfloor=40$ bins for \huawei{}
and into
$\lfloor\varepsilon L\rfloor=5843$ bins for \azure{}.

The data reduction given by \cref{dr:thm:tcri}
reduces the number of time instants by  about~$6$ times.
It does not change the number of requests,
yet reduces the number of request types
by about one third,
since after time compression,
many requests of the same flavor have coinciding start and end times
and are thus of the same type.

After \cref{dr:rrule:epsL} the number of requests
is decreased significantly:~$5.2$ times for \azure{} and~$12.4$ times for \huawei{}.
And even more significant is the change
in the number of request types,
which, for each data set, fell about~$50$ times.

A natural question is whether
the effect of \cref{dr:rrule:epsL} can be increased
by packing the $\lfloor\varepsilon L\rfloor$ bins for deletion better.
The table presents the resource utilization measures~$R$ and~$K$
described in \cref{dr:quality-estimation}.
The value of~$R$ shows
that our implementation of \cref{dr:rrule:epsL}
removes at least $94\,\%$ of all requests that it can remove in principle.
However,
judging by~$K$,
\cref{dr:rrule:epsL} potentially leaves $57\,\%$
more requests in \huawei{} than there have to be
and $32\,\%$ more requests in \azure{} than have to be.
On the one hand,
this means that
there might still be a little room for improvement,
yet certainly not by an order of magnitude.
On the other and,
the number of requests that have to remain
is only a lower bound.
Thus,
it might even be that the $\lfloor\varepsilon L\rfloor$ bins
are already packed optimally.

Finally,
the choice of~$\varepsilon$
allows for a trade\hyp off of solution quality versus output size of the
data reduction,
illustrated in \cref{fig:plots-huawei-cloud}.
In both cases,
we can observe
data reduction by an order of magnitude for $\varepsilon=0.02$.
The number of request types
shrinks by another order of magnitude for $\varepsilon\in[0.08, 0.12]$.

Good candidates for compromise
are $\varepsilon=0.1$ on the \huawei{} data set
and $\varepsilon=0.16$  on the \azure{} data set,
if that would allow for solving the problem and such an error is permissible.

\section{Conclusion}
\noindent
We have studied the potential of data reduction with performance
guarantees for DVBP.
While we have shown obstacles for proving size bounds
of instances after data reduction,
we proved guarantees on the solution quality.

Despite our data reduction rules being very simple
and coming without size bounds,
they proved to be very effective in experiments with real data
from Microsoft Azure and Huawei Cloud,
shrinking problem instances by orders of magnitude.

Unfortunately,
for two the two open sets,
the number of request types after data reduction
is still out of reach for algorithms
with running time exponential in the number of request types,
for example, for the ILP from \cref{thm:ilp}.


\begin{thebibliography}{21}
\providecommand{\natexlab}[1]{#1}
\providecommand{\url}[1]{\texttt{#1}}
\providecommand{\urlprefix}{URL }
\expandafter\ifx\csname urlstyle\endcsname\relax
  \providecommand{\doi}[1]{doi:\discretionary{}{}{}#1}\else
  \providecommand{\doi}{doi:\discretionary{}{}{}\begingroup
  \urlstyle{rm}\Url}\fi
\providecommand{\selectlanguage}[1]{\relax}
\providecommand{\eprint}[2][]{\url{#2}}

\bibitem[{Abu-Khzam et~al.(2022)Abu-Khzam, Lamm, Mnich, Noe, Schulz, and
  Strash}]{ALM+22}
F.~N. Abu-Khzam, S.~Lamm, M.~Mnich, A.~Noe, C.~Schulz, and D.~Strash,
  \emph{Recent advances in practical data reduction}, \emph{Algorithms for Big
  Data}, Springer, \emph{LNCS}, vol. 13201, 2022, pp. 97--133,
  \doi{10.1007/978-3-031-21534-6_6}.

\bibitem[{Achterberg et~al.(2020)Achterberg, Bixby, Gu, Rothberg, and
  Weninger}]{ABG+20}
T.~Achterberg, R.~E. Bixby, Z.~Gu, E.~Rothberg, and D.~Weninger, \emph{Presolve
  reductions in mixed integer programming}, {INFORMS} Journal on Computing
  \textbf{32} (2020), 473--506, \doi{10.1287/ijoc.2018.0857}.

\bibitem[{Bansal et~al.(2016)Bansal, Eli{\'{a}}{\v{s}}, and Khan}]{BEK16}
N.~Bansal, M.~Eli{\'{a}}{\v{s}}, and A.~Khan, \emph{Improved approximation for
  vector bin packing}, \emph{SODA 2016}, SIAM, 2016, pp. 1561--1579,
  \doi{10.1137/1.9781611974331.ch106}.

\bibitem[{van Bevern et~al.(2015)van Bevern, Mnich, Niedermeier, and
  Weller}]{BMNW15}
R.~van Bevern, M.~Mnich, R.~Niedermeier, and M.~Weller, \emph{Interval
  scheduling and colorful independent sets}, Journal of Scheduling \textbf{18}
  (2015), 449--469, \doi{10.1007/s10951-014-0398-5}.

\bibitem[{Bodlaender et~al.(2014)Bodlaender, Jansen, and Kratsch}]{BJK14}
H.~L. Bodlaender, B.~M.~P. Jansen, and S.~Kratsch, \emph{Kernelization lower
  bounds by cross-composition}, SIAM Journal on Discrete Mathematics
  \textbf{28} (2014), 277--305, \doi{10.1137/120880240}.

\bibitem[{Christensen et~al.(2017)Christensen, Khan, Pokutta, and
  Tetali}]{CKPT17}
H.~I. Christensen, A.~Khan, S.~Pokutta, and P.~Tetali, \emph{Approximation and
  online algorithms for multidimensional bin packing: A survey}, Computer
  Science Review \textbf{24} (2017), 63--79,
  \doi{10.1016/j.cosrev.2016.12.001}.

\bibitem[{Coffman et~al.(1983)Coffman, Garey, and Johnson}]{CGJ83}
E.~G. Coffman, M.~R. Garey, and D.~S. Johnson, \emph{Dynamic bin packing},
  {SIAM} Journal on Computing \textbf{12} (1983), 227--258,
  \doi{10.1137/0212014}.

\bibitem[{Dell'Amico et~al.(2020)Dell'Amico, Furini, and Iori}]{DFI20}
M.~Dell'Amico, F.~Furini, and M.~Iori, \emph{A branch-and-price algorithm for
  the temporal bin packing problem}, Computers {\&} Operations Research
  \textbf{114} (2020), 104\,825, \doi{10.1016/j.cor.2019.104825}.

\bibitem[{Drucker(2012)}]{Dru12}
A.~Drucker, \emph{New limits to classical and quantum instance compression},
  \emph{FOCS 2012}, 2012, pp. 609--618, \doi{10.1109/FOCS.2012.71}.

\bibitem[{Fellows et~al.(2011)Fellows, Gaspers, and Rosamond}]{FGR11}
M.~R. Fellows, S.~Gaspers, and F.~A. Rosamond, \emph{Parameterizing by the
  number of numbers}, Theory of Computing Systems \textbf{50} (2011), 675--693,
  \doi{10.1007/s00224-011-9367-y}.

\bibitem[{Flum and Grohe(2006)}]{FG06}
J.~Flum and M.~Grohe, \emph{Parameterized Complexity Theory}, Texts in
  Theoretical Computer Science, An EATCS Series, Springer, 2006,
  \doi{10.1007/3-540-29953-X}.

\bibitem[{Fomin et~al.(2019)Fomin, Lokshtanov, Saurabh, and Zehavi}]{FLSZ19}
F.~V. Fomin, D.~Lokshtanov, S.~Saurabh, and M.~Zehavi, \emph{Kernelization},
  Cambridge University Press, 2019, \doi{10.1017/9781107415157}.

\bibitem[{Garey and Johnson(1979)}]{GJ79}
M.~R. Garey and D.~S. Johnson, \emph{Computers and Intractability: A Guide to
  the Theory of {NP}-Completeness}, Freeman, New York, USA, 1979.

\bibitem[{Hadary et~al.(2020)Hadary, Marshall, Menache, Pan, Greeff, Dion,
  Dorminey, Joshi, Chen, Russinovich, and Moscibroda}]{HMM+20}
O.~Hadary, L.~Marshall, I.~Menache, A.~Pan, E.~E. Greeff, D.~Dion, S.~Dorminey,
  S.~Joshi, Y.~Chen, M.~Russinovich, and T.~Moscibroda, \emph{Protean: {VM}
  allocation service at scale}, \emph{OSDI 2020}, {USENIX} Association, 2020,
  pp. 845--861.

\bibitem[{Kannan(1987)}]{Kan87}
R.~Kannan, \emph{Minkowski's convex body theorem and integer programming},
  Mathematics of Operations Research \textbf{12} (1987), 415--440,
  \doi{10.1287/moor.12.3.415}.

\bibitem[{Lenstra(1983)}]{Len83}
H.~W. Lenstra, \emph{Integer programming with a fixed number of variables},
  Mathematics of Operations Research \textbf{8} (1983), 538--548,
  \doi{10.1287/moor.8.4.538}.

\bibitem[{Lokshtanov et~al.(2017)Lokshtanov, Panolan, Ramanujan, and
  Saurabh}]{LPRS17}
D.~Lokshtanov, F.~Panolan, M.~S. Ramanujan, and S.~Saurabh, \emph{Lossy
  kernelization}, \emph{STOC 2017}, {ACM}, 2017, pp. 224--237,
  \doi{10.1145/3055399.3055456}.

\bibitem[{Ray(2021)}]{Ray21}
A.~Ray, \emph{There is no {APTAS} for 2-dimensional vector bin packing:
  Revisited}, ArXiv  (2021), \doi{10.48550/arXiv.2104.13362}.

\bibitem[{Sheng et~al.(2022)Sheng, Cai, Cui, Li, Hua, Jin, Zhou, Hu, Zhu, Peng,
  Zha, and Wang}]{SCC+22}
J.~Sheng, S.~Cai, H.~Cui, W.~Li, Y.~Hua, B.~Jin, W.~Zhou, Y.~Hu, L.~Zhu,
  Q.~Peng, H.~Zha, and X.~Wang, \emph{{VMAgent}: A practical virtual machine
  scheduling platform}, \emph{IJCAI 2022}, 2022, \doi{10.24963/ijcai.2022/860}.

\bibitem[{Shi et~al.(2018)Shi, Luo, Dong, Jin, and Shen}]{SLD+18}
J.~Shi, J.~Luo, F.~Dong, J.~Jin, and J.~Shen, \emph{Fast multi-resource
  allocation with patterns in large scale cloud data center}, Journal of
  Computational Science \textbf{26} (2018), 389--401,
  \doi{10.1016/j.jocs.2017.05.005}.

\bibitem[{Woeginger(1997)}]{Woe97b}
G.~J. Woeginger, \emph{There is no asymptotic {PTAS} for two-dimensional vector
  packing}, Information Processing Letters \textbf{64} (1997), 293--297,
  \doi{10.1016/s0020-0190(97)00179-8}.

\end{thebibliography}
\end{document}